\newcommand{\blind}{0} 
\theoremstyle{plain}  
\newtheorem{thm}{Theorem}[section] 
\newtheorem{lem}[thm]{Lemma} 
\newtheorem{prop}[thm]{Proposition}
\theoremstyle{definition}
\theoremstyle{remark}
\newcommand{\st}{\preceq_{\rm st}}
\newcommand{\by}{\pmb{y}}
\newcommand{\bvartheta}{\pmb{\vartheta}}
\newcommand{\bhF}{\pmb{\hat{F}}}
\newcommand{\bhq}{\pmb{\hat{q}}}
\newcommand{\diff}{\, {\rm d}}
\newcommand{\one}{\mathbbm{1}}
\newcommand{\crps}{\operatorname{\mathrm{CRPS}}}
\DeclareMathOperator*{\argmin}{argmin}
\newcommand{\beginsupplement}{%
        \setcounter{table}{0}
        \renewcommand{\thetable}{S\arabic{table}}%
        \setcounter{figure}{0}
        \renewcommand{\thefigure}{S\arabic{figure}}%
     }
\newcommand{\orange}{\color{black}}
\begin{document}
\title{Distributional (Single) Index Models}
\if1\blind
{
} \fi
\if0\blind
{
  	\author{Alexander Henzi, Gian-Reto Kleger, Johanna F.~Ziegel\thanks{Alexander Henzi is PhD student in Statistics, Johanna F.~Ziegel is Professor of Applied Stochastics, Institute of Mathematical Statistics and Actuarial Science, University of Bern, Alpeneggstrasse 22, 3012 Bern, Switzerland (e-mail: \url{alexander.henzi@stat.unibe.ch}, \url{johanna.ziegel@stat.unibe.ch}); Gian-Reto Kleger, MD, is Head the Division of Intensive Care Medicine, Cantonal Hospital, St.~Gallen, Rorschacherstrasse 95, 9007 St.Gallen, Switzerland
(e-mail: \url{gian-reto.kleger@kssg.ch}). A.~Henzi and J.~F.~Ziegel gratefully acknowledge financial support from the Swiss National Science Foundation. The authors thank the Swiss Society of Intensive Care Medicine for providing the data. The work has greatly benefitted from discussions with Lutz D\"umbgen, Tilmann Gneiting, Alexander Jordan, and Alexandre M\"osching. This is an Accepted Manuscript of an article published by Taylor \& Francis in the Journal of the American Statistical Association  on 19 July 2021, available online: \url{https://www.tandfonline.com/doi/full/10.1080/01621459.2021.1938582}}}
} \fi
\date{\today}
\maketitle

\begin{abstract}
A Distributional (Single) Index Model (DIM) is a semi-parametric model for distributional regression, that is, estimation of conditional distributions given covariates. The method is a combination of classical single index models for the estimation of the conditional mean of a response given covariates, and isotonic distributional regression. The model for the index is parametric, whereas the conditional distributions are estimated non-parametrically under a stochastic ordering constraint. We show consistency of our estimators and apply them to a highly challenging data set on the length of stay (LoS) of patients in intensive care units. We use the model to provide skillful and calibrated probabilistic predictions for the LoS of individual patients, that outperform the available methods in the literature.

\medskip
\noindent
{\em Keywords}: Distributional regression, intensive care unit length of stay, probabilistic forecast, single index model, stochastic ordering constraint
\end{abstract}

\newpage
\section{Introduction}
Regression approaches for the full conditional distribution of an outcome given covariates are gaining momentum in the literature \citep[][{\orange and the references therein}]{Hothorn2014}. They have already become an indispensable tool in probabilistic weather forecasting \citep{Gneiting2014, Vannitsem2018} but also find numerous applications in other fields such as economics, social sciences and medicine; see e.g.~\cite{Machado2000}, \cite{Chernozhukov2013}, \cite{Klein2015}, \cite{Duarte2017} and \cite{Silbersdorff2018}.

If the outcome is real-valued, then conditional distributions can be characterized in terms of their cumulative distribution function (CDF) or quantile function, and various techniques for the estimation of these objects have been proposed. {\orange \cite{Foresi1995} and} \cite{Peracchi2002} {\orange build} on the extant methods for the estimation of single quantiles or probabilities \citep{Koenker2005}, and suggest to approximate the conditional distribution by a cascade of regressions for quantiles or for the CDF evaluated at certain thresholds. A drawback of this approach is that the resulting estimates are not necessarily isotonic (the so-called 'quantile crossing problem') and thus require correction, for which remedies have already been developed, see e.g.~\cite{DetteVolgushev2008,Chernozhukov2010}.

A broad class of methods that directly yield well-defined probability distributions are generalized additive models for location, shape and scale \citep[GAMLSS]{Rigby2005}. They build on generalized linear models \citep[GLM]{McCullagh1989} and generalized additive models for the mean \citep[GAM]{Hastie1990} but also allow to model shape and scale parameters as functions of covariates. The GAMLSS framework has has been extended to Bayesian statistics \citep{Umlauf2018} and combined with popular machine learning techniques such as boosting \citep{Thomas2018}, neural networks \citep{Rasp2018} and regression forests \citep{Schlosser2019}.

Finally, there are also powerful semi-parametric and nonparametric techniques for the estimation of conditional distributions. Fully nonparametric methods estimate the conditional distribution functions locally, for example by kernel functions \citep{Hall1999, Dunson2007, Li2008}, or by partitioning of the covariate space, as in quantile random forests \citep{Meinshausen2006, Athey2019}.
A frequently used semi-parametric distributional regression method is Cox regression \citep{Cox1972}, which models the hazard rate of the outcome but also allows to derive its survival function. Conditional transformation models \citep{Hothorn2014} assume a parametric distribution for an unknown monotone transformation of the response, which is estimated along with the model parameters. \citet{Hall2005,Zhang2017} propose semi-parametric methods that reduce the dimension of the covariate space by a suitable projection, and then estimate the conditional distributions non-parametrically given the projections by kernel methods. 

We introduce a new approach to distributional regression that can be seen as a combination of a single index model with isotonic distributional regression \citep[IDR,][]{Henzi2019}. The dimension reduction of the covariate space achieved by the single index assumption is in the spirit of \citet{Hall2005,Zhang2017} but the combination with IDR is new, and has the advantage to be free of any implementation choices or tuning parameters. 

Let $Y$ be a real-valued response and $X$ a covariate in some covariate space $\mathcal{X}$. We want to estimate the conditional distribution of $Y$ given $X$, that is, $\mathcal{L}(Y\mid X)$. To expose the main idea, suppose that $\mathcal{X} = \mathbb{R}^d$. Then, a Distributional (Single) Index Model (DIM) could be
\begin{equation}\label{eq:SIM}
\mathbb{P}(Y \le y \mid X) = F_{\alpha_0^\top X}(y), \quad \text{for all $y \in \mathbb{R}$,}
\end{equation}
where $\alpha_0 \in \mathbb{R}^d$, $\alpha_0^\top X$ denotes the scalar product between $\alpha_0$ and $X$, and $(F_u)_{u \in \mathbb{R}}$ is a family of CDFs such that 
\begin{equation}\label{eq:iso}
F_u \st F_v \quad \text{if $u \le v$},
\end{equation}
where $\st$ denotes the usual stochastic order, that is $F_u \st F_v$ if $F_u(y) \ge F_v(y)$ for all $y \in \mathbb{R}$. We call $\theta(x) = \alpha_0^Tx$ in representation (\ref{eq:SIM}) the index (function). 

If the parameter $\alpha_0$ in the previous example \eqref{eq:SIM} is known, then a natural method to estimate the unknown family $(F_u)_u$ of stochastically ordered CDFs is IDR as introduced by \citet{Henzi2019}, see also \citet{Moesching2020}. IDR is a nonparametric technique to estimate conditional distributions under stochastic ordering constraints. In brief, IDR works as follows. Given training data $(\vartheta_1,y_1),\dots,(\vartheta_n,y_n)$, where $\vartheta_i \in \Theta$ for some partially ordered set $\Theta$, IDR yields the unique optimal vector $\mathbf{\hat{F}} = (\hat{F}_1,\dots,\hat{F}_n)$ of CDFs that minimizes
\[
\frac{1}{n}\sum_{i=1}^n \crps(F_i,y_i),
\]
over all vectors $(F_1,\dots,F_n)$ of CDFs that respect the stochastic ordering constraints $F_i \st F_{j}$ if $\vartheta_i \preceq\vartheta_j$, $i, j=1,\dots,n$. Here, for any CDF $F$ and $y \in \mathbb{R}$,
\begin{equation} \label{eq:crps}
\crps(F,y) = \int_{\mathbb{R}} \left( F(z) - \one \{ y \le z \} \right)^2 \diff z 
\end{equation}
is the widely applied proper scoring rule called the continuous ranked probability score \citep[CRPS,][]{Matheson1976,Gneiting2007}.
If we have a sample $(x_1,y_1),$ \dots, $(x_n,y_n)$ from $(X,Y) \in \mathbb{R}^d \times \mathbb{R}$, we can apply IDR to the training data $(\alpha_0^\top x_1,y_1)$, \dots, $(\alpha_0^\top x_n,y_n)$, that is, we set $\vartheta_i = \alpha_0^\top x_i$, $i=1,\dots,n$ and $\Theta=\mathbb{R}$. This yields a distributional regression model for $(X,Y)$ that may be used to provide probabilistic predictions for $Y$ given $X$, see {\orange \citet[Section 2.5]{Henzi2019}} and Section \ref{sec:prediction}. 

DIMs are closely related to generalized linear models, which assume that the conditional distributions $(F_{u})_u$ belong to a known exponential family of distributions with mean $\mathbb{E}(Y\mid X = x) = g(\alpha_0^Tx)$, where $g$ is a fixed, strictly monotone link function. In fact, the Gaussian, Poisson, Gamma and Binomial GLM can be subsumed under the DIM, since they also satisfy the stochastic ordering constraint on the conditional distributions. Our approach, to leave the conditional distributions $(F_{u})_u$ unspecified, is already widely applied in classical regression for the mean, where models of the type $\mathbb{E}(Y\mid X = x) = g(\alpha_0^Tx)$ with unknown link function $g$ are called single index models. Typically, $g$ is assumed to be a smooth function and estimated by kernel regression or local polynomial approximation \citep{Hardle1993} or local polynomial approximation \citep{Carroll1997, Zou2014}. More recently, shape constrained single index models have been considered with monotone \citep{Balabdaoui2019} and convex \citep{Kuchibhotla2017} link functions. DIMs directly extend monotone single index models for the mean, since the stochastic ordering assumption on the conditional distributions implies an isotonic conditional mean function.

There is a vast literature on the estimation of the index in single index models, and we refer to \cite{Lanteri2020} for a comprehensive overview. In Section \ref{sec:estimation}, we discuss estimators for the index and the distribution functions in DIMs. Briefly, when IDR is used to estimate the conditional distribution functions, then it is sufficient to know the index function up to isotonic transformations, i.e.~to find a pseudo index function that approximates the \emph{ordering} implied by the true index. This approach is supported by the asymptotic analysis in Section \ref{sec:consistency}, which shows that when a monotone transformation of the estimated index function is consistent at the parametric rate, then a DIM with that index estimator is consistent.

A major application of distributional regression techniques is forecasting. It has been recognized in many problems, such as weather prediction or economic forecasting, that point forecasts are unable to account for the full forecast uncertainty and should be replaced by probabilistic forecasts \citep{Gneiting2014}. Distributional regression methods are statistical tools to provide such probabilistic forecasts. One fundamental contribution of DIMs is that they allow to associate a natural distributional prediction to point forecasts: If a point forecast from a statistical model is taken as the index in a DIM, for example the estimated conditional expected value, then the DIM naturally extends this deterministic forecast to a probabilistic one. Moreover, the only prerequisite is an isotonic relationship between the point forecast and the outcome in a stochastic ordering sense, which is often a natural and intuitive assumption for reasonable point forecasts.

In Section \ref{sec:data}, we use a DIM for predictions in a highly challenging dataset on the length of stay (LoS) of intensive care unit (ICU) patients. Accurate LoS predictions could serve as a tool for ICU physicians, for example to plan the number of available beds, or to identify potential long stay patients at an early stage. Moreover, the same models that are used for prediction may also be used for risk-adjustment and benchmarking across different ICUs. In the last twenty years, there have been many approaches to find appropriate regression models for LoS, see \cite{Zimmerman2006, Moran2012, Verburg2014} for some examples and \cite{Verburg2014, Kramer2017} for literature reviews. The extant methods typically model the conditional mean and are unsatisfactory when applied for single patient predictions, since the distribution of LoS is strongly right-skewed with a large variance even after conditioning on covariates. We therefore argue that LoS predictions should be probabilistic. 
In Section \ref{sec:data}, we derive calibrated and informative probabilistic forecasts for LoS, and show that the DIM outperforms existing distributional regression methods in terms of predictive accuracy.

\section{Distributional index models}
In this section, we define the DIM in its most general form. Let $Y$ be a real-valued response, and let $X$ be covariates in some general space $\mathcal{X}$. The link between $X$ and $Y$ is the index function $\theta\colon\mathcal{X} \to \mathbb{R}^d$, where $\mathbb{R}^d$ is equipped with some partial order $\preceq$. Let further $(F_u)_{u \in \mathbb{R}^d}$ be a family of CDFs such that $F_u \st F_v$ if $u \preceq v$. The DIM then assumes that
\begin{equation} \label{eq:mainAssumption}
	\mathbb{P}(Y \leq y \mid X) = F_{\theta(X)}(y).	
\end{equation}
Due to the stochastic ordering assumption, it directly follows that the conditional distributions are ordered in the index, that is, $\theta(x) \preceq \theta(x')$ implies $F_{\theta(x)} \st F_{\theta(x')}$.

We assume further that the function $\theta$ belongs to a finite dimensional vector space $\mathcal{F}$, i.e.~a parametric model for $\theta$. If $\theta_1, \ldots, \theta_p$ are a basis of $\mathcal{F}$ and if $d = 1$, then we recover the form $\mathbb{P}(Y \leq y \mid \tilde{X} = \tilde{x}) = F_{\alpha_0^T\tilde{x}}(y)$, where $\tilde{x} = (\theta_1(x) \ldots, \theta_p(x))$, and hence, the analogy to single index models. However, the estimation procedure suggested in the next section can be applied with any dimension $d$ and any partial order $\preceq$ on $\mathbb{R}^d$.

\section{Estimation} \label{sec:estimation}
Having motivated and formalized the DIM, we propose a method for estimation. Assume that a training dataset $(x_i, y_i)$, $i = 1, \ldots, n$, of independent realizations of $(X,Y)$ satisfying the model assumption (\ref{eq:mainAssumption}) is available.

In principle, it would be desirable to have a simultaneous estimator for both the index and the distribution functions. In Section \ref{sec:consistency}, we show that simultaneous estimation is possible theoretically, but computationally infeasible. The method we propose here, and for which we provide asymptotic results, is a two-stage estimation in which first the index $\theta$ is estimated, say by $\hat{\theta}$, and then the conditional CDFs based on pairs $(\hat{\theta}(x_i), y_i)$. This is inspired by the 'plug-in estimators' for monotone single index models suggested in \cite{Balabdaoui2019}.
{\orange
The estimation procedure is straightforward and reads as follows:
\begin{enumerate}
\item Estimate $\theta$ with some estimator $\hat{\theta}$ on the data $(x_i, y_i)_{i = 1}^n$,
\item compute the in-sample predictions $\vartheta_i = \hat{\theta}(x_i)$, $i = 1, \ldots, n$,
\item estimate the distribution functions $\hat{F}_{u}, u \in \mathbb{R}^d$, using $(\vartheta_i, y_i)_{i = 1}^n$.
\end{enumerate}

In the next two subsections, we reverse the order of the estimation procedure and first suggest our method for Step 3, because this has important implications for the choice of the index estimators in Step 1.
}
\subsection{Isotonic distributional regression} \label{subsec:idr}
Because of model assumption (\ref{eq:mainAssumption}), we seek an estimator $\hat{F}_u, u \in \mathbb{R}^d,$ such that $\hat{F}_u \st \hat{F}_v$ if $u \preceq v$, i.e.~$\hat{F}_u(y) \geq \hat{F}_v(y)$ for all $y \in \mathbb{R}$ and given $u, v$.
For fixed $y$, this suggests to define $\bhF = (\hat{F}_{\vartheta_{1}}, \ldots, \hat{F}_{\vartheta_{n}})$ as
\begin{equation} \label{eq:argminCdf}
 \bhF(y) \ = \argmin_{\eta_k \geq \eta_l \text{ if } \vartheta_k \preceq \vartheta_l} \sum_{i = 1}^{n} (\eta_i - \one\{y_i \leq y\})^2.	
\end{equation}
It turns out that (\ref{eq:argminCdf}) indeed yields a collection of well-defined conditional CDFs, and this estimator is called the IDR in \cite{Henzi2019}. By \citet[Theorem 2.2]{Henzi2019}, IDR can equivalently be defined in terms of conditional quantile functions, $\bhq = (\hat{q}_{\vartheta_{1}}, \ldots, \hat{q}_{\vartheta_{n}})$, where
\begin{equation} \label{eq:argminQuantiles}
 \bhq(\alpha) \ = \argmin_{\beta_k \leq \beta_l \text{ if } \vartheta_k \preceq \vartheta_l} \sum_{i = 1}^{n} (\one\{y_i \leq \beta_i\} - \alpha)(\beta_i - y_i)
\end{equation}
for any $\alpha \in (0,1)$, and the $\argmin$ is defined as the componentwise smallest minimizer if it is not unique. IDR estimates the conditional distributions non-parametrically under the stochastic order constraints. For IDR, the index $u$ can take values in any partially ordered set $\Theta$. The particular choice of the loss functions, i.e.~the squared error for the estimation of probabilities in (\ref{eq:argminCdf}) and the classical quantile loss function 
in (\ref{eq:argminQuantiles}), is in fact irrelevant here: Any other consistent loss function for the expectation or quantiles would yield the same result \citep{Henzi2019, Jordan2019}.

The above estimators are defined when the index $u$ (in $\hat{F}_u$ or $\hat{q}_u$) is in $\{\vartheta_{1}, \ldots, \vartheta_{n}\}\subseteq \Theta$. The CDFs or quantile functions for an arbitrary $u$ can be derived by interpolation of $\hat{F}_{\vartheta_{1}}, \ldots, \hat{F}_{\vartheta_{n}}$ or $\hat{q}_{\vartheta_{1}}, \ldots, \hat{q}_{\vartheta_{n}}$ for $\Theta=\mathbb{R}$, and a suitable generalization thereof for general partially ordered $\Theta$ \citep[Section 2.5]{Henzi2019}.

The following proposition is a direct consequence of the above formulas. It shows invariance properties of IDR, which make it a suitable method for estimating the conditional distributions in DIMs. We use the notation $\hat{F}_u(y; \, \bvartheta, \by)$ and $\hat{q}_u(\alpha; \, \bvartheta, \by)$ for the IDR CDFs and quantile functions estimated with training data $\bvartheta = (\vartheta_k)_{k = 1}^m$ and $\by = (y_k)_{k = 1}^m$.

\begin{prop}[Invariance of IDR]  \label{prop:invariance}
Let $\by = (y_k)_{k = 1}^m \in \mathbb{R}^m$ and $\bvartheta = (\vartheta_k)_{k = 1}^m \in \Theta^m$, and let $\Theta'$ be a partially ordered set with order $\preceq'$. Let further $g: \Theta \rightarrow \Theta'$ be such that $\vartheta_k \preceq \vartheta_l$ if and only if $g(\vartheta_k) \preceq' g(\vartheta_l)$ and $h: \mathbb{R} \rightarrow \mathbb{R}$ be strictly increasing. Define $g(\bvartheta) = (g(\vartheta_k))_{k = 1}^m$. Then, for $j = 1, \ldots, m$, $y \in \mathbb{R}$, $\alpha \in (0,1)$,
\[
	 \hat{q}_{g(\vartheta_j)}(\alpha; \, g(\bvartheta), h(\by)) = h(\hat{q}_{\vartheta_j}(\alpha; \bvartheta, \by)), \quad
	 \hat{F}_{g(\vartheta_j)}(h(y); \, g(\bvartheta), h(\by)) = \hat{F}_{\vartheta_j}(y; \bvartheta, \by).
\]
\end{prop}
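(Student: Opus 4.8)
The plan is to read both identities directly off the variational characterizations \eqref{eq:argminCdf} and \eqref{eq:argminQuantiles}, together with the equivalence between the CDF and the quantile formulation established in \citet[Theorem 2.2]{Henzi2019}. A first, purely bookkeeping observation reduces the statement to the case $g = \mathrm{id}$: the feasible sets in \eqref{eq:argminCdf} and \eqref{eq:argminQuantiles} depend on $\bvartheta$ only through the set of index pairs $(k,l)$ with $\vartheta_k \preceq \vartheta_l$, $k, l \in \{1, \dots, m\}$, and this set is by assumption unchanged when $\bvartheta$ is replaced by $g(\bvartheta)$ and $\preceq$ by $\preceq'$; the objective functions do not involve $\bvartheta$ at all. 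Hence $\bhF(\cdot\,; g(\bvartheta), h(\by)) = \bhF(\cdot\,; \bvartheta, h(\by))$ and $\bhq(\cdot\,; g(\bvartheta), h(\by)) = \bhq(\cdot\,; \bvartheta, h(\by))$, and it remains to compare the estimators built from $(\bvartheta, \by)$ with those built from $(\bvartheta, h(\by))$.

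For the CDF identity, strict monotonicity of $h$ yields $\one\{h(y_k) \le h(y)\} = \one\{y_k \le y\}$ for all $k$ and all $y \in \mathbb{R}$. Consequently the objective in \eqref{eq:argminCdf} formed from the responses $h(\by)$ and evaluated at threshold $h(y)$ coincides, as a function of $(\eta_1, \dots, \eta_m)$, with the objective formed from $\by$ at threshold $y$; the feasible set is the same and the objective is strictly convex, so the unique minimizers agree, which is exactly $\hat{F}_{\vartheta_j}(h(y); \bvartheta, h(\by)) = \hat{F}_{\vartheta_j}(y; \bvartheta, \by)$.

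For the quantile identity I would avoid substituting directly into \eqref{eq:argminQuantiles}, since $h$ need not be surjective and hence does not simply map the feasible polytope onto itself. Instead, \citet[Theorem 2.2]{Henzi2019} identifies $\hat{q}_{\vartheta_j}(\alpha)$ with the lower $\alpha$-quantile of the IDR CDF $\hat{F}_{\vartheta_j}$. By \eqref{eq:argminCdf} this CDF is a step function that is constant between consecutive order statistics of the response vector, so its lower $\alpha$-quantile equals $\min\{y_k : \hat{F}_{\vartheta_j}(y_k; \bvartheta, \by) \ge \alpha\}$, and likewise $\min\{h(y_k) : \hat{F}_{\vartheta_j}(h(y_k); \bvartheta, h(\by)) \ge \alpha\}$ when the responses are $h(\by)$. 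Combining this with the CDF identity from the previous paragraph and the strict monotonicity of $h$,
\begin{align*}
\hat{q}_{\vartheta_j}(\alpha; \bvartheta, h(\by))
&= \min\{h(y_k) : \hat{F}_{\vartheta_j}(y_k; \bvartheta, \by) \ge \alpha\} \\
&= h\bigl(\min\{y_k : \hat{F}_{\vartheta_j}(y_k; \bvartheta, \by) \ge \alpha\}\bigr)
= h\bigl(\hat{q}_{\vartheta_j}(\alpha; \bvartheta, \by)\bigr),
\end{align*}
as required.

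I expect the quantile part to be the only place where care is needed: one must line up the ``componentwise smallest minimizer'' convention of \eqref{eq:argminQuantiles} with the lower-quantile description of $\hat{q}_{\vartheta_j}$ and check that the set of increase points of $\hat{F}_{\vartheta_j}$ is contained in the response values. Both are consequences of \citet[Theorem 2.2]{Henzi2019} and of the piecewise-constant form of $y \mapsto \bhF(y)$ evident from \eqref{eq:argminCdf}, after which the argument is routine. A more self-contained route would argue directly from \eqref{eq:argminQuantiles}, first showing that some minimizer has all coordinates among the $y_k$ and that the coordinatewise map $h$ carries the smallest minimizer to the smallest minimizer, but this seems more laborious than passing through the CDF.
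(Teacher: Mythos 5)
Your proof is correct and is essentially the argument the paper intends when it calls the proposition ``a direct consequence of the above formulas'': the feasible sets in \eqref{eq:argminCdf} and \eqref{eq:argminQuantiles} depend on $\bvartheta$ only through the induced order relation, and $\one\{h(y_k)\le h(y)\}=\one\{y_k\le y\}$ makes the CDF objective literally invariant. Your extra care on the quantile identity---routing it through the CDF identity and the lower-quantile characterization from \citet[Theorem 2.2]{Henzi2019} rather than substituting into \eqref{eq:argminQuantiles}, where non-surjectivity of $h$ would be an issue---is a sensible tightening of a step the paper leaves implicit, and introduces no gap.
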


Proposition \ref{prop:invariance} shows that when IDR is used to estimate the conditional distributions in Step 3, then it is sufficient to know the index $\theta$ up to increasing transformations. Moreover, any isotonic transformation can be applied to the response $Y$ to simplify the estimation of $\theta$ in Step 1, and then reverted by its inverse, without affecting the estimation of the conditional distributions. Hence, the task of estimating the index function $\theta$ is simplified to finding an estimator for a pseudo index that induces the same ordering on $\theta(x_i)$, $i = 1, \ldots, n$.

\subsection{Index estimators} \label{sec:index}

A simple but effective way to estimate the index in DIMs are classical generalized linear models. This might be surprising, because it seems that a parametric assumption has to be imposed on the distribution functions $(F_u)_{u}$ for this approach. However, due to the invariance of DIMs under monotone transformations (Proposition \ref{prop:invariance}),
it is sufficient that such a parametric assumption holds only approximately, in the sense that a monotone transformation of the index estimator converges to the index function; see Assumption (A4) in Section \ref{sec:consistency}. The only requirement is that the linear predictor of the GLM exhibits an isotonic relationship with the outcome. This can be verified by the rank correlation between the index and the outcome, or by plots of the empirical distribution of the outcome stratified according to the index. A further advantage of this approach is that GLMs are well-understood, implemented efficiently in nearly every statistical software, and one can directly build on extant literature from non-distributional regression to find a suitable index estimator. The effectiveness of GLMs in the context of DIMs is demonstrated in the data application in Section \ref{sec:data}.

Another powerful tool for index estimation in DIMs is quantile regression \citep{Koenker2005}. The stochastic ordering of the conditional distributions in DIMs is equivalent to the assumption that the conditional quantile functions $q_{\theta(x)}(\alpha)$ are increasing in the index $\theta(x)$ for every $\alpha \in (0, 1)$. One can thus estimate one or several quantiles by quantile regression, e.g.~the median and/or the 90\% quantile, and obtain estimates of the complete distribution by taking this (these) quantile(s) as the index (vector) in a DIM. Compared to the direct application of quantile regression for the estimation of conditional distributions, one does not need to specify a grid of quantiles over the whole unit interval and correct quantile crossings, but can focus on the estimation of a small number of quantiles that reveal the ordering of the conditional distributions.

In the case of a distributional single index model $F_{\theta(X)}(y) = F_{\alpha_0^TX}(y)$, that is a DIM with $d=1$, one might estimate the index $\alpha_0$ via methods for single index models. {\orange For the monotone single index model, efficient estimators have been developed recently \citep{Balabdaoui2019a,Balabdaoui2020}.} Index estimators for the single index model, such the one proposed in \cite{Lanteri2020}, also allow for non-monotone relationships between the index function $\alpha_0^Tx$ and the response, and hence monotonicity should be checked carefully. Compared to GLMs as a pseudo index, single index models gain flexibility by not assuming any fixed functional form of the relationship between $\alpha_0^TX$ and the outcome $Y$. The drawbacks are that it is more difficult to accommodate high dimensional categorical variables and to let numeric covariates enter the index-function in a non-linear fashion, e.g.~via polynomial or spline expansions, which is essential in our data application on ICU LoS. Since the DIM is already invariant under monotone transformations of the index function, it is questionable whether the benefits of using single index methods surpass these drawbacks. The same concerns are also valid for estimation methods for distributional single index models in the spirit of \cite{Hall2005}, which requires a notion of distance on the covariate space and is hence not directly applicable when categorical covariates are present.
{\orange
\subsection{Extension: Sample splitting and bagging} \label{sec:bagging}
The estimation procedure suggested so far uses in-sample predictions with the estimated index function, $\hat{\theta}(x_i)$, as covariates for distributional regression with IDR.   Depending on the index estimator, this strategy may be prone to overfitting. As a remedy, we propose a procedure in the spirit of (sub)sample aggregation (bagging).

Instead of estimating both the index function and the conditional distributions on the whole dataset, one may split the data (randomly) into two separate parts for these tasks, say $D_1 = \{1, \dots, \lfloor n\xi \rfloor\}$ and $D_2 = \{\lfloor n\xi\rfloor + 1, \dots, n\}$ for some $\xi \in (0,1)$. The index function is estimated with $(x_i, y_i)$, $i \in D_1$, and the second part of the data with the \emph{out-of-sample} predictions $\hat{\theta}(x_j)$, $j \in D_2$, serves as training data for IDR. To avoid that the estimated distribution functions depend on the random split of the training data, this procedure should be repeated several times, every time with a different split of the training data, and the conditional distribution functions are averaged in the end. The application of (sub-)sample aggregating ((sub-)bagging) has already been suggested in \cite{Henzi2019} in conjunction with IDR, where it yields smoother distribution functions and (in the case of subagging) reduces the computation time for larger datasets with multivariate covariates ($d \geq 2$). These advantages can also be expected for the DIM. In addition, the consistency result (Theorem \ref{thm:consistency}) still holds under sample splitting when the data is split into $D_1$ and $D_2$ at a constant fraction $\xi \in (0,1)$.
}
\section{Prediction} \label{sec:prediction}
This section reviews basic tools for the evaluation of probabilistic forecasts, and related properties of DIMs when used for forecasting. We denote by $F$ a generic, random probabilistic forecast for a random variable $Y$, and all probability statements are understood with respect to the joint distribution of $F$ and $Y$, which we denote by $\mathbb{P}$. For the distributional index model, the randomness of $F = F_{\theta(X)}$ is fully captured in the index $\theta(X)$.

As argued in \cite{Gneiting2007}, \emph{calibration} is a minimal requirement for probabilistic forecasts, meaning that the forecast should be statistically compatible with the distribution of the response. Of particular interest for DIMs is threshold calibration, requiring 
\begin{equation} \label{eq:thrCalib}
	\mathbb{P}(Y \leq y \mid F(y)) = F(y) , \quad y \in \mathbb{R}.
\end{equation}
It is shown in \cite{Henzi2019} that IDR, and hence also the DIM,
is always in-sample threshold calibrated, that is, (\ref{eq:thrCalib}) holds when $\mathbb{P}$ is the empirical distribution of the training data used to estimate the distribution functions. Threshold calibration can be assessed by reliability diagrams \citep{Wilks2011}, in which estimated forecast probabilities $\hat{F}(y)$ are binned and compared to the observed event frequencies in each bin. Another prominent tool for calibration checks is the probability integral transform (PIT)
\begin{equation}  \label{eq:PIT} 
	Z = F(Y-) + V \left( F(Y) - F(Y-) \right), 
\end{equation}
where $V$ is uniformly distributed on $[0,1]$ and independent of $F$ and $Y$, and $F(y-) = \lim_{z \uparrow y} F(z)$. If $Z$ is uniformly distributed, then the forecast $F$ is said to be probabilistically calibrated. The PIT can be used to identify forecast biases as well as underdispersion and overdispersion \citep{DieboldGuntherETAL1998,Gneiting2007}.

Among different calibrated probabilistic forecasts, the most informative forecast is arguably the one with the narrowest prediction intervals. This property, which only concerns the forecast distribution $F$, is referred to as \emph{sharpness} \citep{Gneiting2007}. Sharpness and calibration are often assessed jointly by means of proper scoring rules \citep{Gneiting2007a}, which map probabilistic forecasts and observations to a numerical score. An important example is the CRPS defined at (\ref{eq:crps}). IDR enjoys in-sample optimality among all stochastically ordered forecasts with respect to a broad class of proper scoring rules, including the CRPS and weighted versions of it, that is,
\begin{equation*}
\crps_{\mu}(F,y) = \int_{\mathbb{R}} \left( F(z) - \one \{ y \le z \} \right)^2 \diff \mu(z), 
\end{equation*}
where $\mu$ is a locally-finite measure. This emphasizes that IDR is a natural way to estimate the probability distributions in DIMs, since it is not tailored to a specific loss function.

\section{Consistency} \label{sec:consistency}

\subsection{Two stage estimation}
We work with a triangular array of random elements $(X_{ni}, Y_{ni}) \in \mathcal{X} \times \mathbb{R}$, $i = 1, \ldots, n$, and assume that for all $n$, the following hold:
\begin{itemize}
\item[(A1)] The random elements $X_{ni}$, $i = 1, \ldots, n$, are independent and identically distributed, and $Y_{ni}$, $i = 1, \ldots, n$, are independent conditional on $(X_{ni})_{i = 1}^n$ with
\[
	\mathbb{P}(Y_{ni} \leq y \mid X_{ni}) = F_{\theta(X_{ni})}(y),
\]
where $\theta: \mathcal{X} \rightarrow \mathbb{R}$ is a function and $(F_u)_{u \in \mathbb{R}}$ is a family of distributions such that $F_u \st F_v$ if $u \leq v$.
\item[(A2)] There exists a constant $L > 0$ such that for all $u, v, y \in \mathbb{R}$,
\[
	|F_u(y) - F_v(y)| \leq L|u-v|.
\]
\item[(A3)] On an interval $I$, the random variables $\theta(X_{ni})$ admit a density with respect to the Lebesgue measure which is bounded from below by $C_1 > 0$ and from above by $C_2$. 
\item[(A4)] There exist a strictly increasing function $g: \mathbb{R} \rightarrow \mathbb{R}$ and a constant $C_0 > 0$ such that
\[
	\lim_{n \rightarrow \infty} \mathbb{P}\left(\sup_{x \in \mathcal{X}}|g(\hat{\theta}_n(x)) - \theta(x)| \geq C_0(\log(n)/n)^{1/2}\right) = 0.
\]
\end{itemize}
We denote by $\hat{F}_{n; u}$ the IDR estimator computed with training data $(\hat{\theta}_n(X_{nj}), Y_{nj})_{j = 1}^n$, i.e. 
\[
	\hat{F}_{n; u}(y) = \hat{F}_u(y; (\hat{\theta}_n(X_{nj}))_{j = 1}^n, (Y_{nj})_{j = 1}^n),
\]
with the notation of Section \ref{subsec:idr}.

\begin{thm}[Consistency of DIM] \label{thm:consistency}
Under assumptions (A1)-(A4), there exists a constant $C > 0$ such that
\[
	\lim_{n \rightarrow \infty} \mathbb{P}\left(\sup_{y \in \mathbb{R}, x \in \mathcal{X}_n} |\hat{F}_{n; \hat{\theta}_n(x)}(y) - F_{\theta(x)}(y)| \geq C \Big(\frac{\log n}{n}\Big)^{1/6}\right) = 0,
\]
where $\mathcal{X}_n = \{x \in \mathcal{X}: \, [\theta(x) \pm (\log n / n)^{1/6}] \subseteq I\}$.
\end{thm}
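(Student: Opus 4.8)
Here is a sketch of how I would attack Theorem~\ref{thm:consistency}.

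The plan is to first eliminate the index‑estimation error via the invariance property, then reduce everything to a conditional isotonic‑regression problem with an almost‑exact index, and finally run a standard empirical‑process argument. Set $\delta_n = (\log n/n)^{1/3}$, $n_1 = \lceil n\xi\rceil$, and $\hat\psi_n = g\circ\hat\theta_n$ with $g$ from (A5). By Proposition~\ref{prop:invariance}, applied with this strictly increasing $g$ and $h=\mathrm{id}$, the IDR fit built from $(\hat\theta_n(X_{nj}),Y_{nj})_{j\le n_1}$ agrees, after relabelling the index, with the one built from $(\hat\psi_n(X_{nj}),Y_{nj})_{j\le n_1}$; abbreviating this common object by $\tilde F$, we have $\hat F_{n;\hat\theta_n(x)}(y)=\tilde F_{\hat\psi_n(x)}(y)$ for all $x,y$. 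By (A5) the event $\mathcal E_n=\{\sup_{x}|\hat\psi_n(x)-\theta(x)|<C_0 n^{-1/2}\}$ has probability tending to $1$, and on it every relevant index value is within $C_0 n^{-1/2}=o(\delta_n)$ of the true one. I would then condition on $\mathcal G_n=\sigma\big((X_{ni})_{i\le n},(Y_{ni})_{n_1<i\le n}\big)$: thanks to the split in (A4), $\hat\psi_n$ and $\mathcal E_n$ are $\mathcal G_n$‑measurable, whereas conditionally on $\mathcal G_n$ the responses $Y_{nj}$, $j\le n_1$, are still independent with $\mathbb P(Y_{nj}\le y\mid\mathcal G_n)=F_{\theta(X_{nj})}(y)$.

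Next I would use that, for $\Theta=\mathbb R$ and fixed $y$, the IDR estimator (\ref{eq:argminCdf}) is the antitonic regression of $\one\{Y_{nj}\le y\}$ on $\hat\psi_n(X_{nj})$, which yields the min--max representation
\[
\tilde F_v(y)\;=\;\min_{a\le v}\ \max_{b\ge v}\ \operatorname{Av}\big\{\one\{Y_{nj}\le y\}:a\le\hat\psi_n(X_{nj})\le b,\ j\le n_1\big\},
\]
where $\operatorname{Av}$ is the empirical mean over the indicated index set. I would also fix a compact interval $I'$ containing a $\delta_n$‑neighbourhood of $g(I)$ and a design‑regularity event $\mathcal D_n\in\mathcal G_n$ on which every subinterval of $I'$ of length at least $\delta_n/2$ contains at least $c_1 n\delta_n$ of the $\hat\psi_n(X_{nj})$, $j\le n_1$; using (A3) (which makes the density of $\theta(X)$ bounded below on compacts), $\mathcal E_n$, $\delta_n\gg n^{-1/2}$, $n\delta_n^3=\log n$, and a Bernstein plus union bound over the $O(n)$ relevant intervals, one gets $\mathbb P(\mathcal D_n)\to1$.

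Now fix $y\in\mathbb R$, $x\in\mathcal X_n$, put $v=\hat\psi_n(x)$, and note that on $\mathcal E_n$ the definition of $\mathcal X_n$ forces the windows below to lie in $I'$. If $\hat F_{n;\hat\theta_n(x)}(y)-F_{\theta(x)}(y)>C\delta_n$, then choosing $a=v-\delta_n$ in the representation produces some $b\ge v$ with $S:=\{j\le n_1:v-\delta_n\le\hat\psi_n(X_{nj})\le b\}$ satisfying $\operatorname{Av}\{\one\{Y_{nj}\le y\}:j\in S\}>F_{\theta(x)}(y)+C\delta_n$; on $\mathcal E_n\cap\mathcal D_n$ one has $|S|\ge c_1 n\delta_n$ (since $S$ covers $[v-\delta_n,v]$), and every $j\in S$ has $\theta(X_{nj})\ge\theta(x)-\delta_n-2C_0 n^{-1/2}$, hence $F_{\theta(X_{nj})}(y)\le F_{\theta(x)}(y)+L\delta_n+2LC_0 n^{-1/2}$ by (A2); subtracting, $\operatorname{Av}\{\one\{Y_{nj}\le y\}-F_{\theta(X_{nj})}(y):j\in S\}>(C-L)\delta_n-o(\delta_n)$. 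The opposite deviation $F_{\theta(x)}(y)-\hat F_{n;\hat\theta_n(x)}(y)>C\delta_n$ is symmetric: now some $a\le v$ gives $S:=\{j\le n_1:a\le\hat\psi_n(X_{nj})\le v+\delta_n\}$ with $|S|\ge c_1 n\delta_n$ and $F_{\theta(X_{nj})}(y)\ge F_{\theta(x)}(y)-L\delta_n-2LC_0 n^{-1/2}$ for $j\in S$, so the corresponding average is $<-(C-L)\delta_n+o(\delta_n)$.

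It then remains to show such an empirical deviation is unlikely. Conditionally on $\mathcal G_n$ the summands $\one\{Y_{nj}\le y\}-F_{\theta(X_{nj})}(y)$, $j\le n_1$, are independent, centred and bounded in $[-1,1]$, while $S$ and the $F_{\theta(X_{nj})}(y)$ are deterministic; as $a,b,y$ vary, the pair $(S,y)$ ranges over only $O(n^3)$ effectively distinct cases ($S$ over the $O(n^2)$ index sets cut out by intervals of the sorted sample values $\hat\psi_n(X_{nj})$, and for fixed $S$ the relevant quantity is piecewise monotone in $y$ with $O(n)$ breakpoints at the $Y_{nj}$). Hoeffding's inequality bounds the conditional probability of each case, on $\mathcal E_n\cap\mathcal D_n$, by $2\exp\big(-\tfrac12 c_1 n\delta_n\,((C-L)\delta_n-o(\delta_n))^2\big)=2\,n^{-\gamma_n(C)}$ with $\gamma_n(C)\to\tfrac12 c_1(C-L)^2$, so the union bound over the $O(n^3)$ cases tends to $0$ once $C$ is chosen large enough that $\tfrac12 c_1(C-L)^2>3$. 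Taking expectations over $\mathcal G_n$ and adding $\mathbb P(\mathcal E_n^c)+\mathbb P(\mathcal D_n^c)\to0$ completes the argument. The one genuinely delicate point is the bookkeeping for the index perturbation: one must use the split (A4) so that conditioning on $\mathcal G_n$ freezes $\hat\psi_n$, $\mathcal E_n$, $\mathcal D_n$ while keeping the first‑half responses independent with the correct conditional CDFs, and one must control each \emph{per‑summand} conditional mean $F_{\theta(X_{nj})}(y)$ — rather than a population average over a wider interval — so that enlarging the windows by $C_0 n^{-1/2}$ costs only $o(\delta_n)$. Everything else is the classical isotonic‑regression rate computation, in which balancing the $O(\delta_n)$ bias against the $O(\sqrt{\log n/(n\delta_n)})$ stochastic fluctuation pins down the exponent $1/3$.
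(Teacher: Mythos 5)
Your overall architecture coincides with the paper's proof: reduce to $g=\mathrm{id}$ via Proposition \ref{prop:invariance}, use the sample split so that conditioning freezes $\hat{\theta}_n$ while the first-half responses remain independent with conditional CDFs $F_{\theta(X_{nj})}$, invoke the min--max formula for isotonic regression with windows of length $\delta_n=(\log n/n)^{1/3}$, lower-bound the number of design points in such windows by $\asymp n\delta_n$ using (A3), absorb the $C_0 n^{-1/2}$ index error through the Lipschitz condition (A2), and balance the bias $L\delta_n$ against a $\sqrt{\log n/(n\delta_n)}$ fluctuation term. All of this matches the argument in Appendix \ref{app:proof}.

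The one step that is not valid as written is your uniform-in-$y$ control of the fluctuations: you union bound over pairs $(S,y)$ with $y$ ranging over ``breakpoints at the $Y_{nj}$'' and apply Hoeffding to each case. Those breakpoints are random --- they are the very responses entering the average --- so conditionally on $\mathcal{G}_n$ the summands $\one\{Y_{nk}\le Y_{nj}\}-F_{\theta(X_{nk})}(Y_{nj})$ are neither independent nor centred, and Hoeffding cannot be applied at such data-dependent $y$. The standard repair is to discretize with a deterministic grid per window $S$ on which the ($\mathcal{G}_n$-measurable) average CDF $\tfrac1{|S|}\sum_{j\in S}F_{\theta(X_{nj})}(\cdot)$ increases by at most $\delta_n$, control the discretization error by monotonicity, and union bound over the grid; equivalently, one can invoke a DKW-type inequality for sums of independent, non-identically distributed indicators, which is exactly what the paper does: Lemma \ref{lem:lem1} (Theorem 4.6 of M\"osching and D\"umbgen) gives $\mathbb{P}\big(\sqrt{k}\,\|\hat{\mathbb{G}}_k-\bar{G}_k\|_\infty\ge\eta\big)\le M e^{-2\eta^2}$ uniformly in $y$, and Lemma \ref{lem:lem2} then takes the maximum over all $O(n^2)$ windows at once, folding in the index-estimation error via (A2) and (A5). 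With that substitution (which only changes constants, not the rate), your sketch becomes a correct, slightly more self-contained variant of the paper's proof; the remaining bookkeeping --- the choice $a=v-\delta_n$, the count $|S|\ge c_1 n\delta_n$, the $o(\delta_n)$ cost of the index perturbation, and taking $C$ large enough to beat the polynomial number of cases --- agrees with the paper's computation.
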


An analogous result to Theorem \ref{thm:consistency} can be shown for the variant of the DIM with sample splitting described in Section \ref{sec:bagging}. The requirements under sample splitting are slightly weaker, namely, the density of $\theta(X_{ni})$ does not have to be bounded from above in (A2), and in (A4), it is sufficient that the index estimator $\hat{\theta}_n$ converges at a rate of $o((\log(n)/n)^{1/3})$ instead of $n^{-1/2}$. The resulting convergence rate of the DIM with sample splitting is of order at least $(\log(n)/n)^{1/3}$. The proofs of Theorem \ref{thm:consistency}, both, with and without sample splitting, rely on the consistency results about the monotonic least squares estimator in \cite{Moesching2020}, and are given in Appendix \ref{app:proof}. 

Assumption (A1) is the basic model assumption of DIMs. The Lipschitz-continuity in (A2) also appears in the monotone single index model for the mean \citep{Balabdaoui2019}. Since the distributional single index model and the monotone single index model are equivalent when $Y$ is binary, the Lipschitz assumption (A2) is natural in this context; also (A3) can be derived from the assumptions in \cite{Balabdaoui2019}. Assumptions (A2) and (A3) are required for the consistency of the monotone least squares estimator, with (A3) ensuring that the 'design points' $\theta(X_{nj})$ are dense enough in a region of interest, c.f.~\cite{Moesching2020}. {\orange A parametric model $\theta = \alpha_1 \theta_1 + \dots + \alpha_p \theta_p$ satisfies this assumption when at at least one of the summands $\alpha_i\theta_i$ admits a continuous distribution on $I$ with density bounded away from zero.} In (A4), we require uniform consistency of a monotone transformation of the index estimator at a rate of $n^{-1/2}$, i.e.~not necessarily consistency of the index estimator itself. In a parametric model $\theta = \alpha_1 \theta_1 + \dots + \alpha_p \theta_p$, uniform consistency is satisfied for any $\sqrt{n}$-consistent estimator of the coefficients $\alpha_1, \ldots, \alpha_p$, when the functions $\theta_1, \ldots, \theta_p$ are bounded. All estimators suggested in Section \ref{sec:index} are consistent at the rate $n^{-1/2}$ under suitable conditions.

\subsection{Simultaneous estimation}
In this subsection, we treat the question to what extent simultaneous estimation of the index and the distribution functions is possible and sensible in the DIM. Currently, the results are of theoretical interest only. 

It has been shown in \cite{Balabdaoui2019} that for the monotone single index model, there exists a simultaneous minimizer $(\psi_0, \alpha_0)$ of the squared error
\begin{equation*}
	\sum_{i = 1}^n (\psi_0(\alpha_0^Tx_i) - y_i)^2
\end{equation*}
where $\psi_0: \mathbb{R} \rightarrow \mathbb{R}$ is an increasing function, $\alpha_0 \in  \{x \in \mathbb{R}^p: \|x\| = 1\}$ is the index, and $(x_i, y_i) \in \mathbb{R}^p \times \mathbb{R}$, $i = 1, \ldots, n$. The minimizer is in general not unique. 

A similar result also holds in the distributional index model, when the loss function is defined as 
\begin{equation} \label{eq:simult_loss}
	l(\hat{\theta}, \bhF) = \sum_{i = 1}^n \crps(\hat{F}_{\hat{\theta}(x_i)}, y_i).
\end{equation}
For basis functions $\theta_1, \ldots, \theta_p$ of the vector space $\mathcal{F}$ containing the true index function $\theta$, every index estimator $\hat{\theta}: \mathcal{X} \rightarrow \mathbb{R}^d$ can be written as $\hat{\theta} = \hat{\alpha}_1 \theta_1 + \dots + \hat{\alpha_p}\theta_p$. The loss (\ref{eq:simult_loss}) has a unique minimizer $\bhF = (\hat{F}_{\hat{\theta}(x_i)}, \ldots, \hat{F}_{\hat{\theta}(x_n)})$ for fixed $\hat{\theta}$, namely the IDR. This minimizer only depends on $\hat{\theta}$ via the partial order on the points $\hat{\theta}(x_i)$, $i = 1, \ldots, n$. But the number of partial orders on $n$ points is finite, and so there exists a minimizer of (\ref{eq:simult_loss}).

In general, the number of partial orders induced by index functions $\hat{\theta}$ is too large for a direct minimization of (\ref{eq:simult_loss}) to be possible: When $\mathcal{X} = \mathbb{R}^p$ and $\theta_1, \ldots, \theta_p$ are the coordinate projections, then the number of total orders grows at a rate of $n^{2(p-1)}$ \citep{Balabdaoui2019}. Moreover, when the index space is partially but not totally ordered, trivial solutions (a perfect fit to the training data) may appear, namely if the points $\hat{\theta}(x_i)$, $i = 1, \ldots, n$, are all incomparable in the partial order. Hence, the simultaneous estimation of the index and the distribution functions in DIMs is generally not feasible. A related interesting question for further research is to find a way to directly parametrize and estimate partial orders for isotonic or isotonic distributional regression, instead of indirectly via an index function.

\section{Data application} \label{sec:data}
We apply a DIM to derive probabilistic forecasts for intensive care unit (ICU) length of stay (LoS) based on patient information available 24 hours after admission. The main difficulty of such predictions is that, even conditional on many demographic and physiologic patient specific covariates, there is often great uncertainty in the LoS. In addition to unknown factors (e.g.~frailty status, patient or family wishes), the LoS also depends on non-patient-related information such as ICU organization and resources. We therefore model the LoS using data of single ICUs rather than a merged dataset, thus keeping the ICU-related variables fixed. This allows forecasts within each single ICU as well as the comparison of the forecasted LoS of patients across ICUs. The same methodology can also be used on a joint dataset of several ICUs, giving a reference LoS forecast on the combined case-mix. Using these predictions for risk-adjustment and benchmarking is promising but goes beyond the scope of this paper.

All computations in this application were performed in R 4.0 \citep{R} using the packages \verb!mgcv! \citep{Wood2017} for the estimation of index models and Cox proportional hazards regression, \verb!quantreg! \citep{Koenker2020} for quantile regression, and \verb!isodistrreg! \citep[\url{https://github.com/AlexanderHenzi/isodistrreg}]{Henzi2019} for IDR.

\subsection{Data and variables}

\begin{sidewaystable}
\caption{Covariates used for ICU length of stay predictions. Availability of the variables is given by 'admission' (at patient admission) or by the number of hours after admission.} \label{tab:vars}
\bigskip
\begin{tabularx}{\textwidth}{l|l|X}
\toprule
\textbf{Variable} & \textbf{Availability} & \textbf{Description}\\
\hline\hline
Age & admission & patient age at admission \\
\hline
Sex & admission & male, female \\
\hline
Planned & admission & admission is announced at least 12h in advance (true/false) \\
\hline
Readmission & admission & patient was discharged from the same ICU at most 48 hours ago (true/false)\\
\hline
Admission source & admission & admission source (emergency room; intermediate care unit, high dependency unit, recovery room; hospital ward; surgery; others) \\
\hline
Location before & admission & location before \emph{hospital} admission (home; other hospital; others)\\
hospital admission & & \\
\hline
Diagnosis & 24h & main diagnosis on first day (structured into: cardiovascular, respiratory, gastrointestinal, neurological, metabolic, trauma, others; in total 36 different specific ICU relevant diagnoses)\\
\hline
NEMS & 8h & NEMS \citep{Miranda1997} over first shift after patient admission (8-12h)\\
\hline
SAPS II & 24h & \cite{le1993}\\
\hline
Interventions & 24h & interventions 24 hours before until 24 hours after admission (13 categories of interventions, e.g.~surgeries, interventions in respiratory system, cardiovascular interventions)\\
\bottomrule
\end{tabularx}
\end{sidewaystable}

Since 2005, the Swiss Society of Intensive Care Medicine collects ICU key figures and information on patient admissions in the Minimal Dataset of the Swiss Society of Intensive Care Medicine (MSDi). Our analysis is based on a part of this dataset suitable for LoS predictions, namely, we include 18 out of 86 ICUs which, after the application of selection criteria described below, include more than 10'000 patient admissions. The codes used as identifiers for the ICUs were generated randomly. The sample sizes range from 10'041 to 36'865 with an average of 17'181 observations per ICU. The cutoff of 10'000 is based on our experience with IDR and probabilistic forecasts in general, which require sufficiently large datasets for a meaningful and stable evaluation, {\orange especially when the models involve large numbers of covariates and a skewed response variable, as it is the case here}. However, the prediction methods can also be applied to smaller datasets.

Based on literature review, we identified the variables described in Table \ref{tab:vars} as relevant for LoS forecasts (\cite{Zimmerman2006}; \citet[Table S1]{Verburg2014}; \cite{Niskanen2009}). We exclude patients that were transferred from or to another ICU, because their LoS is incomplete. As in \cite{Zimmerman2006}, we also remove patients younger than 16 years and patients admitted after transplant operations or because of burns. 
Patients with missing values in the variables in Table \ref{tab:vars} are excluded, too.

Table \ref{tab:vars} documents at what time after admission the relevant covariates for LoS predictions are available. While all variables are available 24 hours after patient admission, the information is completed also for patients staying at the ICU less than one day. For example, ICU interventions within the first 24 hours are then only interventions performed until patient discharge, and the SAPS II is computed based on the worst physiological values until discharge instead of the worst values in the first 24 hours at the ICU. 

\begin{figure}
\caption{Empirical distribution functions of the standardized and non-standardized LoS for selected ICUs. The standardized LoS is defined as $Y - 1 + h / 24$, where $h$ is the admission hour of a patient and $Y$ is the non-standardized LoS, i.e.~the time between patient admission and discharge. Only patients with positive standardized LoS are included. \label{fig:losStandardization}}
\bigskip
\center
\includegraphics[width = 0.7\textwidth]{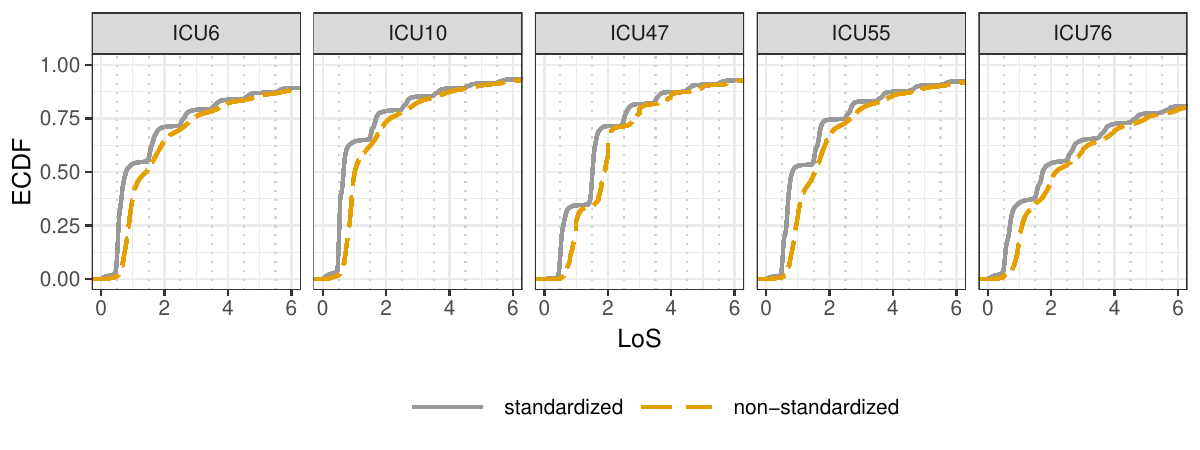}
\end{figure}

In preliminary tests, we found that for probabilistic LoS forecasts, the usual definition of LoS as the time between patient admission and discharge is problematic, because most ICUs discharge patients during specific time windows, but the admission times are spread throughout the day. As a consequence, it may happen that the predicted LoS for certain patients does not conform with the discharge practice of a ICU, e.g.~there might be a high predicted probability for a patient being discharged around midnight but the ICU actually discharges patients in the early afternoon. To circumvent this problem, we decided to measure the LoS as the \emph{time between the next midnight after patient admission until discharge}, thereby standardizing all admission to the same (day)time and revealing the true pattern in the patient discharge times; see Figure \ref{fig:losStandardization}. All results in this section use this definition of the LoS. Patients who do not stay over at least two calendar days are excluded, which is unproblematic since in practice, the data required for predictions is only available 24 hours after admission and the forecast should be conditioned on the event that the patient already stayed at the ICU for 24 hours. Forecasts for the non-standardized LoS, i.e.~the time between admission and discharge, can be derived via the relation
\[
	\mathbb{P}(Y > 1 + t | Y > 1) = \frac{\mathbb{P}(\tilde{Y} > t + h / 24 | \tilde{Y} > 0)}{\mathbb{P}(\tilde{Y} > h / 24 | \tilde{Y} > 0)},
\]
where $Y$ and $\tilde{Y} = Y - 1 + h/24$ denote the LoS and the standardized LoS measured in days, respectively, and $h$ the admission hour of a given patient. Since only patients staying at least until midnight of the admission day are used as training data, our LoS forecasts are conditioned on the event $\{\tilde{Y} > 0\}$ in the above equation.

We select the most recent 20\% of the observations in each ICU for model validation, thereby mimicking a realistic situation in which past data are used to predict the LoS of present and future patients. This implies that forecasts might be inaccurate if the relationship between the covariates and LoS changes over time, and it is part of our analysis to check to what extent past data can be reasonably used to predict the LoS of future patients. Of the remaining data, randomly selected 75\% are used for model fitting and 25\% for model selection via out-of-sample predictions.  All comparisons of different variants of a distributional regression model were performed by such out-of-sample predictions.

\subsection{Derivation of DIM}
To derive an index estimator for the DIM, we can benefit from the comparisons of regression models for point forecasts for LoS in the extant literature. \cite{Moran2012} and \cite{Verburg2014} found that a Gaussian linear regression for the expected log-LoS is suitable for point forecasts, and we use this as our candidate for the index estimator and will refer to it as the 'lognormal index model'. We use the transformation $y \mapsto \log(y + 1)$, which results in more symmetric distributions than the logarithm. All variables from Table \ref{tab:vars} were included in the model, and the effects of the continuous variables age, SAPS and NEMS were modeled by cubic regression splines. Interactions of variables were explored but not included in the final model. We also tested whether merging factor levels with few observations improved the model, but the untransformed covariates yielded the best forecasts in out-of-sample predictions on the part of the data used for model selection.

We tested two other index estimators for the expected LoS to investigate the robustness of the DIM with respect to the index. The first one estimates the expected log-LoS under the assumption of a scaled t-distribution. The mean is modeled as a function of the covariates, with the same specification as for the lognormal index model, and the degrees of freedom are estimated, with a minimal threshold of $5$ to ensure stability. This model is structurally similar to the lognormal index model, but more robust with respect to outliers, which occur even after the log-transformation. The second alternative is a gamma regression for the untransformed LoS with logarithm as the link function. While the three index models yield different predictions on the scale of the LoS, they largely agree when only the \emph{ordering} of the predictions is considered: Over the 18 ICUs, the rank correlation between predictions by two of the models is $0.98$ on average with a minimum of $0.86$. As a consequence, there is no significant difference between the corresponding DIM forecasts: Evaluated on the dataset for model selections, the average CRPS over all ICUs of DIM forecasts based on different models only differs by up to $0.01$, while the averages are around $1.40$. The predictions based on the lognormal index model achieved the best results in most ICUs and were therefore selected for the predictions on the validation data.

Due to the large training datasets, splitting of the training data as described in Section \ref{sec:bagging} only has a marginal effect on the predictions. Estimating the index function on the full training data and the conditional distributions on in-sample predictions only increased the average CRPS by $0.01$ (on $1.40$), compared to a bagging approach with 100 random splits of the training data into equally sized parts for the estimation of the index and the CDFs. For the final evaluation, we show the results of the simpler variant without bagging.

Figure \ref{fig:checkMonotonicity} illustrates how to perform a check of the stochastic ordering assumption of the DIM: We bin the observed LoS according to the index value, and plot the empirical cumulative distribution functions (ECDFs) of the LoS in each bin. By varying the positions and sizes of the bins, it can be seen that the empirical distributions are indeed sufficiently well ordered. The Spearman correlation between the index and the observed LoS is $0.53$ on average over all ICUs (range $0.40-0.65$), which confirms that there is an isotonic relationship between the index and the actual LoS for most ICUs, {\orange taking into account the high uncertainty in the
LoS of ICU patients even conditional on patient information collected at the first day.}

\begin{figure} 
\caption{(a) Index function and $\log(\mathrm{LoS} + 1)$ for selected ICUs. (b) ECDFs of the LoS stratified into the bins given by the vertical shaded stripes in panel (a).\label{fig:checkMonotonicity}}
\bigskip
\center
\includegraphics[width = 0.7\textwidth]{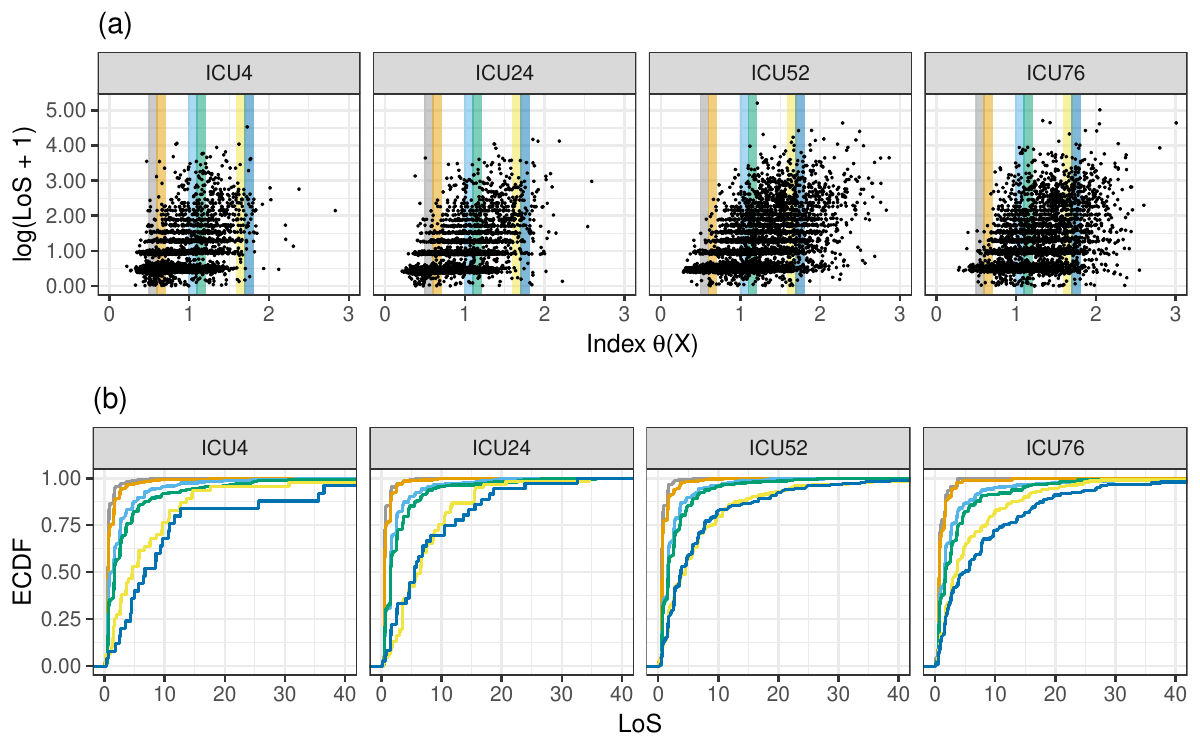}
\end{figure}

\subsection{Alternative regression methods}
We compare the DIM to two other distributional regression methods: A Cox proportional hazards model \citep{Cox1972} and quantile regression with monotone rearrangement \citep{Koenker2005, Chernozhukov2010}. For both, we use the same variables and specifications as in the index estimator for the DIM, which was superior compared to other variants tested; detailed results are provided in the appendix.

A Cox proportional hazards model is a classical choice for modeling survival times, and it shares some similarities with a DIM. Both models are semi-parametric and based on stochastic order restrictions on the conditional distributions, namely the usual stochastic ordering in the DIM and the hazard rate order in Cox regression, which is stronger than the usual stochastic order \citep[Theorem 1.B.1]{Shaked2007}. While the distribution functions are estimated non-parametrically in Cox regression, the relationship between different conditional distributions is modeled parametrically via the hazard ratio, as opposed to the DIM, where only the ordering on the conditional distributions is modeled parametrically by the index function.

Quantile regression, on the other hand, imposes less assumptions on the conditional distributions. The conditional quantiles are modeled separately and satisfy no stochastic order constraints. In particular, if there are strong violations of the stochastic order assumptions of the DIM or Cox regression, we would expect that the more flexible quantile regression achieves better forecasts by fitting crossing quantile curves for different patients. This allows an informal check of the underlying assumptions of Cox regression and the DIM (see Figure \ref{fig:checkRqCrossings}). We use a grid of quantiles from $0.005$ to $0.995$ with steps of $0.001$, which gave better results than a coarser grid with steps of $0.01$.

We also tested fully parametric models of GAMLSS type, and kernel methods as implemented in the {\tt np} package in R \citep{Hayfield2008}. Unfortunately, we could not find a sufficiently flexible parametric family for a GAMLSS, and the application of kernel methods was not feasible due to computational problems with the large datasets and high numbers of covariates. As for the DIM, computation is obviously more demanding than for fully parametric methods, but still fast thanks to the sequential implementation of IDR described in \citet{Henzi2020}. On a personal computer with Intel(R) Core i7-8650 CPU, computation with the lognormal index model without bagging takes 3 seconds for the smallest ICU (6'024 observations in training dataset) and 25 seconds for the largest ICU (22'219 observations). Estimation and prediction on the total dataset (all 18 ICUs) require about 2.5 minutes.

\begin{figure} 
\caption{Predictive CDFs for four selected patients based on the training data of the ICU the patients were admitted to. \label{fig:exampleCdfs_revised}}
\bigskip
\center
\includegraphics[width=0.7\textwidth]{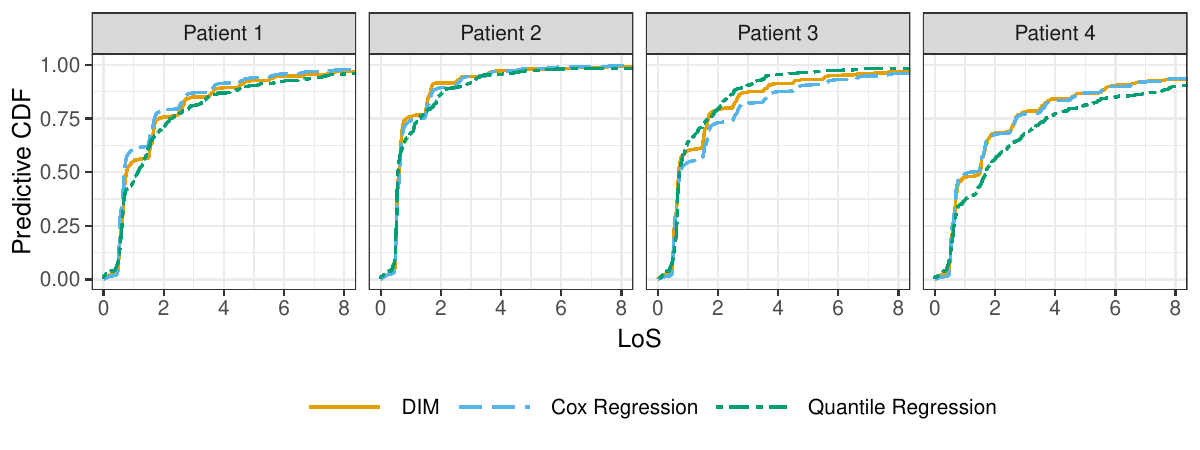}
\end{figure}

\subsection{Results}
Figure \ref{fig:exampleCdfs_revised} illustrates the probabilistic forecasts for different patients based on the training data of the ICU the patients were admitted to. Patient 1, male, 32 years old, was admitted because of a severe sepsis or septic shock. Patient 2 is a 67 years old female with aortic aneurysm or aortic dissection, Patient 3 is 58 years old, male with a metabolic decompensation, and Patient 4 is a 78 old female admitted from a high dependency unit with subarachnoidal hemorrhage. Patient 2 has the shortest predicted LoS: The DIM and Cox regression predict that she leaves the ICU at the first day after admission with a probability of almost $75\%$. For the remaining patients, the predictive CDFs are more skewed, and a LoS of more than three days is not unlikely. It is immediately visible that the DIM and Cox regression are able to recover the pattern in the ICU discharge times, with flat pieces of the CDFs around midnight. Quantile regression, on the other hand, merely interpolates this pattern.

\begin{figure}
\caption{Reliability diagrams of probabilistic forecasts for the predicted probability that the LoS exceeds $1, \, 5, \, 9, \, 13$ days. The forecast probability is grouped into the bins $[0, 0.1], (0.1, 0.2], \ldots, (0.9, 1]$ and the observed frequencies are drawn at the midpoints of the bins. Only bins with more than two observations are included. \label{fig:reliability_revised}}
\bigskip
\center
\includegraphics[width = 0.7\textwidth]{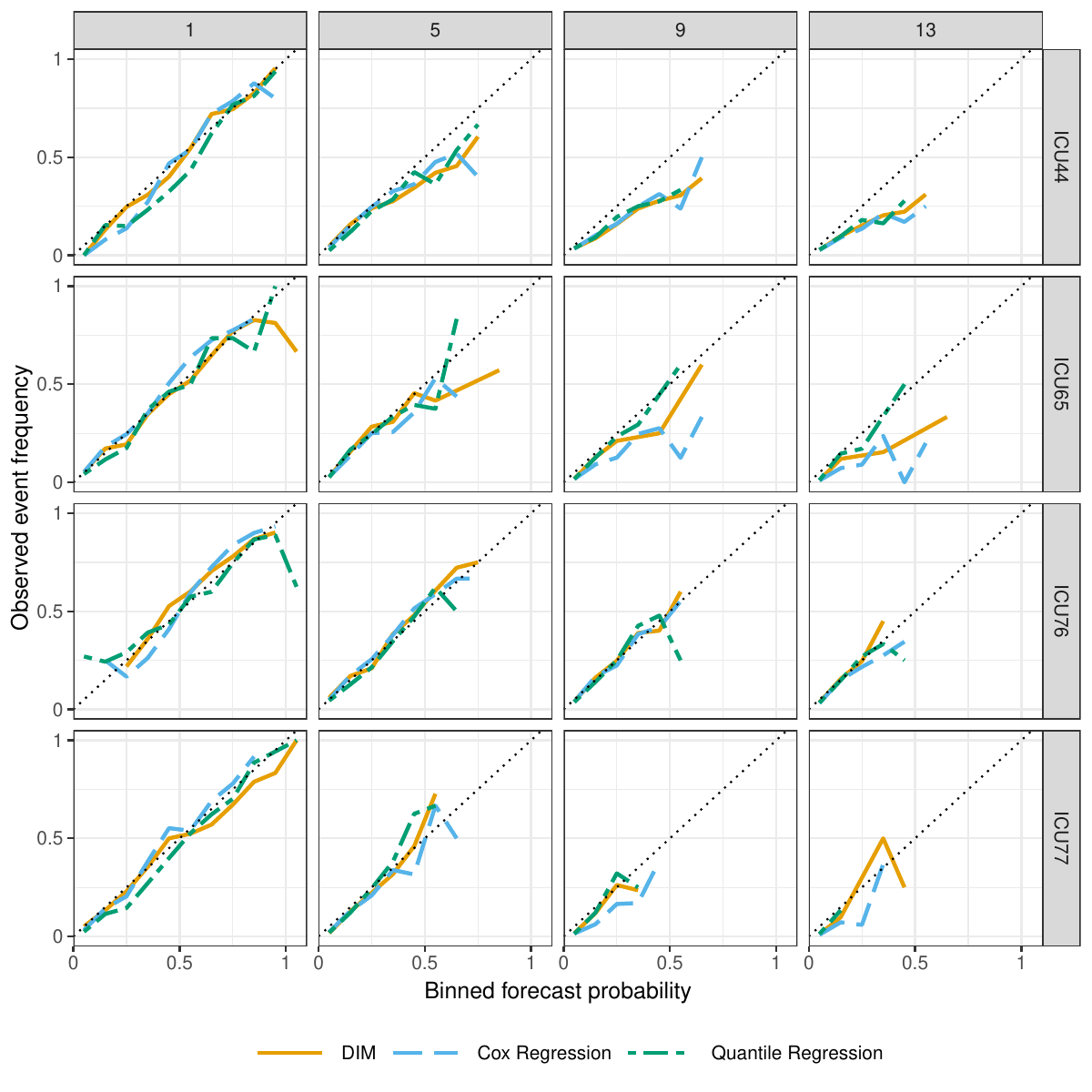}
\end{figure}

\begin{table}
\caption{Summary statistics (mean, median and standard deviation) of numeric variables in the dataset. \label{tab:summaries}} 
\bigskip
\resizebox{\linewidth}{!}{\begin{tabular}{l|r|r|r||r|r|r||r|r|r||r|r|r}
\toprule
ICU & \multicolumn{3}{c||}{LoS} & \multicolumn{3}{c||}{Age} & \multicolumn{3}{c||}{NEMS} & \multicolumn{3}{c}{SAPS}\\
\hline\hline
& mean & med. & sd & mean & med. & sd & mean & med. & sd & mean & med. & sd \\
\hline
ICU44 & 3.9 & 1.5 & 7.8 & 59.0 & 61 & 17.6 & 27.1 & 27 & 8.5 & 34.0 & 31 & 18.9\\
\hline
ICU65 & 1.8 & 0.6 & 4.3 & 67.2 & 69 & 13.9 & 25.5 & 25 & 7.9 & 28.7 & 28 & 12.5\\
\hline
ICU76 & 4.3 & 1.7 & 7.2 & 63.2 & 66 & 15.6 & 30.3 & 30 & 8.3 & 41.2 & 40 & 17.2\\
\hline
ICU77 & 1.8 & 0.6 & 3.2 & 65.0 & 68 & 15.9 & 21.9 & 18 & 8.0 & 31.1 & 28 & 16.1 \\
\bottomrule
\end{tabular}}
\end{table}

Here, detailed results are only shown for the best and worst two ICUs with respect to the CRPS of the DIM forecasts; see Appendix \ref{app:supplement} for tables and figures for all ICUs. Summary statistics of the LoS and other numeric variables for the patients of these ICUs are given in Table \ref{tab:summaries}. All probabilistic regression methods can reliably predict the probability that the LoS exceeds $k = 1, \, 5, \, 9, \, 13$ days; see Figure \ref{fig:reliability_revised}. Figure \ref{fig:pit_revised} shows that the forecasts achieve a better probabilistic calibration than the ECDF of the LoS in the training data, which is uninformative as a forecast and does not take into account changes in the ICU-case mix that are reflected in the covariates. Further improvements of calibration may be possible by selecting a tailored training dataset, taking into account organizational changes, and developments in treatments that have an influence on the LoS or on the relationship between covariates and the LoS. Such information is not available in our dataset.

\begin{figure}
\caption{PIT histograms of the probabilistic forecasts with bins of width $1/20$. \label{fig:pit_revised}}
\bigskip
\center
\includegraphics[width = 0.7\textwidth]{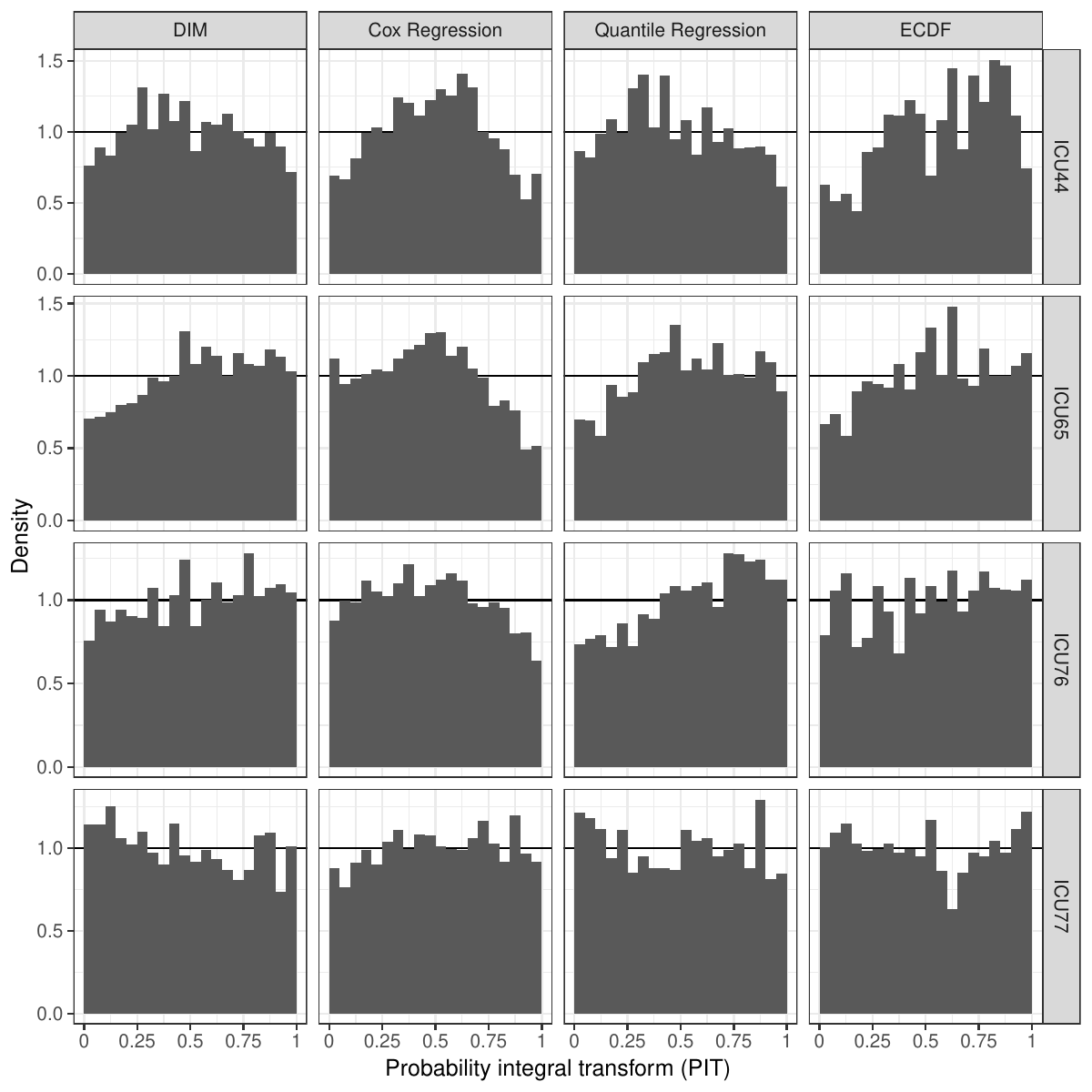}
\end{figure}

While all three distributional regression methods yield similar results in terms of calibration, there is a clear ranking with respect to forecast accuracy: In all ICUs, the DIM achieves the lowest CRPS, followed by quantile regression in second and Cox regression in third place. For comparison, Table \ref{tab:crps} also shows the CRPS of the ECDF forecast, and of the deterministic point forecast of the lognormal index model, which is its mean absolute error. Interestingly, the ECDF forecast achieves a lower mean CRPS in all ICUs (average improvement of 13\%) than the point forecast, although it does not take any covariate information into account. This highlights the superiority of even simple probabilistic forecast over point forecasts in the context of ICU LoS. A further average improvement of 13\% in the mean CRPS is achieved when going from the uninformative ECDF forecast to the worst of the probabilistic regression methods in terms of CRPS, which is Cox regression. The differences in the CRPS of the forecasts using distributional regression methods are smaller, but consistent over the ICUs: In terms of average CRPS, quantile regression outperforms Cox regression in 15 out of 18 ICUs, and the DIM outperforms Cox regression in all and quantile regression in all except 2 ICUs. The difference in CRPS between the DIM and quantile regression is highly significant when tested with Wilcoxon's signed rank test except for the ICUs with identifiers 19 and 33, where the p-values are $0.101$ and $0.219$ and quantile regression achieves lower average scores. Wilcoxon's signed rank test was applied because the CRPS differences are heavy-tailed, so a t-test is not appropriate (see Figure \ref{fig:crpsDiff_revised}).

In conclusion, with distributional regression methods and especially the DIM, it is possible to obtain reliable, reasonably well calibrated, and informative probabilistic forecasts for ICU LoS in a realistic setting. These forecasts are not only more informative than point forecasts, but also reduce the forecast error by more than 25\%.

\begin{table}
\caption{CRPS of probabilistic forecasts. The column 'Point' shows the mean absolute error of the point forecast obtained from the lognormal index model, and p-values of Wilcoxon's signed rank test for the difference in CRPS between DIM and quantile regression are given in the column labelled $p$. P-values smaller than $10^{-16}$ are written as $0$. \label{tab:crps}}
\bigskip
\center
\begin{tabular}{l|r|r|r|r|r|r}
\toprule
ICU & $p$ & DIM & Quantile reg. & Cox reg. & ECDF & Point \\
\midrule
ICU4 & $1.18\cdot 10^{-11}$ & $1.074$ & $1.076$ & $1.089$ & $1.191$ & $1.399$ \\ 
ICU6 & $3.81\cdot10^{-12}$ & $1.360$ & $1.385$ & $1.386$ & $1.605$ & $1.830$ \\ 
ICU10 & $0$ & $1.194$ & $1.221$ & $1.209$ & $1.312$ & $1.553$ \\ 
ICU19 & $1.01\cdot10^{-1}$ & $1.041$ & $1.032$ & $1.048$ & $1.189$ & $1.350$ \\ 
ICU20 & $5.13\cdot10^{-6}$ & $2.216$ & $2.223$ & $2.241$ & $2.505$ & $2.859$ \\ 
ICU24 & $0$ & $1.099$ & $1.111$ & $1.141$ & $1.265$ & $1.416$ \\ 
ICU33 & $2.19\cdot10^{-1}$ & $0.975$ & $0.974$ & $0.983$ & $1.090$ & $1.363$ \\ 
ICU39 & $1.38\cdot10^{-16}$ & $1.332$ & $1.352$ & $1.383$ & $1.697$ & $1.872$ \\ 
ICU44 & $1.06\cdot10^{-3}$ & $2.256$ & $2.259$ & $2.328$ & $2.480$ & $2.952$ \\ 
ICU47 & $3.69\cdot10^{-5}$ & $0.977$ & $0.980$ & $1.036$ & $1.231$ & $1.363$ \\ 
ICU52 & $7.40\cdot10^{-5}$ & $1.845$ & $1.866$ & $1.868$ & $2.121$ & $2.580$ \\ 
ICU55 & $0$ & $1.062$ & $1.085$ & $1.055$ & $1.253$ & $1.445$ \\ 
ICU58 & $1.25\cdot10^{-15}$ & $1.393$ & $1.409$ & $1.442$ & $1.763$ & $1.970$ \\ 
ICU65 & $0$ & $0.908$ & $0.914$ & $0.981$ & $1.062$ & $1.194$ \\ 
ICU76 & $0$ & $2.420$ & $2.448$ & $2.458$ & $2.783$ & $3.468$ \\ 
ICU77 & $1.76\cdot10^{-16}$ & $0.921$ & $0.936$ & $0.938$ & $1.117$ & $1.260$ \\ 
ICU79 & $1.86\cdot10^{-11}$ & $1.446$ & $1.457$ & $1.512$ & $2.172$ & $2.228$ \\ 
ICU80 & $0$ & $0.942$ & $0.971$ & $0.949$ & $1.094$ & $1.253$ \\ 
\midrule
Mean  & & $1.359$ & $1.372$ & $1.392$ & $1.607$ & $1.853$ \\
\bottomrule
\end{tabular}
\end{table}

\section{Discussion}

In this paper, we have introduced DIMs as intuitive and flexible models for distributional regression. Distributional regression approaches provide full conditional distributions of the outcome given covariate information, and are thus more informative than classical regression approaches for the conditional mean, median or specific quantiles. However, specifying a good distributional regression model is usually less intuitive than specifying a regression model for, say, the conditional mean. An appealing feature of DIMs is that for the modeling of the index function classical approaches and intuition for modeling a conditional mean or median can be used. Given the index function, the shape of the full conditional distribution is then learned from training data using IDR, that is, distributional regression under stochastic ordering constraints. The second step does not involve any parameter tuning or implementation choices. 

The idea of reducing the complexity of a potentially high-dimensional covariate space by using an index function in distributional regression has also been used in the work of \citet{Hall2005,Zhang2017}. In these works, the index function has to be univariate and parametrizes a distance on the covariate space that is then used for kernel methods to estimate the conditional distributions. In contrast, the index function in a DIM parametrizes partial orders on the covariate space allowing for stochastic order constrained distributional regression in the second step. 

Finding an informative index function is critical and usually requires expertise of the problem at hand. However, in many cases, existing models for the conditional mean or median can be used directly, as demonstrated in the application on ICU LoS. Indeed, it may even happen that a poorly fitting conditional mean model works well for a DIM since it is sufficient that the model is correct up to monotone transformations, or, in other words, that it is a good model for a pseudo index.

{\orange The distributional regression approach in \citet{Chernozhukov2020} allows to accomodate continuous, discrete and mixed discrete-continuous outcomes. The same is true for IDR, and thus for DIM models. While the case study in this paper concerns a continuous outcome, IDR has been successfully applied to a mixed discrete-continuous outcome in \citet{Henzi2019}. It would be interesting to investigate the different benefits and drawbacks of DIM models versus the methods of \citet{Chernozhukov2020} in particular in the case of discrete outcomes.}

Since IDR can be combined well with (sub-)bagging, the same also holds for DIMs. (Sub-)bagging is useful to avoid overfitting, may increase computational efficiency, and lead to smoother estimated conditional CDFs. We have explored bagging in our data application in Section \ref{sec:data} with relatively at hoc choices for the number of random splits of the training data. A systematic study of optimal choices for subsample sizes and/or iterations is desirable. 

A promising future extension of DIMs is to replace the IDR step by distributional regression under a stronger stochastic ordering constraint such as a likelihood ratio ordering constraint, or by a weaker one such as second order stochastic dominance. However, this requires fundamental advances concerning the estimation of distributions under these constraints.

\appendix

\section{Proof of Theorem \ref{thm:consistency}} \label{app:proof}
The following lemma is Theorem 4.6 in \citet{Moesching2020}, which we state for completeness.

\begin{lem} \label{lem:lem1}
Let $Z_1, Z_2, Z_3, \ldots$ be independent random variables with respective distribution functions $G_1, G_2, G_3, \ldots$. For $k \in \mathbb{N}$, let 
\[
	\hat{\mathbb{G}}_k(\cdot) = \frac{1}{k} \sum_{i = 1}^k \one\{Z_i \leq \cdot\} \quad \text{and} \quad
	\bar{G}_k (\cdot) = \frac{1}{k} \sum_{i = 1}^k G_i(\cdot).
\]
Then there exists a universal constant $M \leq 2^{5/2}e$ such that for all $\eta \geq 0$,
\[
	\mathbb{P}\left(\sqrt{k} \|\hat{\mathbb{G}}_k - \bar{G}_k\|_{\infty} \geq \eta \right) \leq M \exp(-2\eta^2),
\]
where $\|\cdot\|_{\infty}$ denotes the usual supremum norm of functions.
\end{lem}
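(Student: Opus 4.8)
The plan is to prove the two-sided bound by splitting it into two one-sided bounds (costing one factor $2$ in $M$), then to reduce each one-sided supremum to a maximum over finitely many ``level-crossing'' events using the monotonicity of both $\hat{\mathbb{G}}_k$ and $\bar{G}_k$, and finally to control that maximum by a concentration argument that retains the Hoeffding exponent $2\eta^2$ while paying only a universal multiplicative constant. Throughout I would use the quantile coupling $\one\{Z_i \le t\} = \one\{U_i \le G_i(t)\}$ with $U_1, \ldots, U_k$ i.i.d.\ uniform on $[0,1]$, so that $\sqrt{k}(\hat{\mathbb{G}}_k(t) - \bar{G}_k(t)) = k^{-1/2}\sum_{i=1}^k(\one\{U_i \le G_i(t)\} - G_i(t))$ is, for each fixed $t$, a centered average of independent summands bounded in $[-1,1]$. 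Hoeffding's inequality is distribution-free and therefore gives, at every single threshold, $\mathbb{P}(\sqrt{k}(\hat{\mathbb{G}}_k(t) - \bar{G}_k(t)) \ge \eta) \le \exp(-2\eta^2)$, with the same exponent for the lower tail; this is the exponent that must survive the passage to the supremum.

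First I would make the reduction to a discrete maximum explicit. Writing $N(t) = \sum_{i=1}^k \one\{Z_i \le t\}$, the function $N$ is nondecreasing and right-continuous with jumps only at the order statistics $Z_{(1)} \le \cdots \le Z_{(k)}$, while $k\bar{G}_k$ is nondecreasing; hence the one-sided supremum is attained immediately after a jump, so that for $s = \eta/\sqrt{k}$,
\[
	\Big\{\sup_t\big(\hat{\mathbb{G}}_k(t) - \bar{G}_k(t)\big) \ge s\Big\} = \bigcup_{j = 1}^{k} \big\{\bar{G}_k(Z_{(j)}) \le j/k - s\big\} \subseteq \bigcup_{j=1}^k \big\{N(t_j) \ge j\big\},
\]
where $t_j = \inf\{t : \bar{G}_k(t) \ge j/k - s\}$, so that $\mathbb{E}\,N(t_j) \le j - ks$. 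Each event in the union thus asks the sum $N(t_j)$ of independent Bernoulli variables to exceed its mean by at least $\sqrt{k}\eta$, which by Hoeffding is bounded by $\exp(-2\eta^2)$; the lower-tail supremum is treated symmetrically.

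The hard part is to pass from this pointwise control to the union over the $O(k)$ levels \emph{without} incurring the factor $k$ that a naive union bound would cost, and while preserving the exponent $2\eta^2$. The route I would take exploits the negatively associated, multinomial-type increment structure of $N(\cdot)$: over disjoint threshold intervals the increments of $N$ are sums of disjoint families drawn from the same indicators $\one\{Z_i \le \cdot\}$, so they are negatively associated exactly as for the ordinary uniform empirical process. This is the structural ingredient that powers the reflection principle in the classical Dvoretzky--Kiefer--Wolfowitz proof, and I would adapt that first-passage argument to the present inhomogeneous array, bounding the probability of the whole union by a constant multiple of the single worst-level probability rather than by their sum. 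A complementary and reassuring observation is that, for a fixed total mean $\sum_i G_i(t)$, the pointwise variance $\sum_i G_i(t)(1-G_i(t))$ is maximized when all $G_i(t)$ coincide, so the i.i.d.\ uniform empirical process is the natural extremal comparison and one expects the non-identical supremum to be no heavier-tailed.

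I expect the adaptation of the reflection step to the non-identically distributed, negatively associated increments to be the main obstacle, since that is precisely where the clean exponent $2\eta^2$ is secured; a careful accounting of the factors entering that step—the factor $2$ from two-sidedness, the optimization of the single free exponential parameter (the likely source of the $e$), and the geometric bookkeeping of the first-passage decomposition—is what I would expect to produce the stated $M \le 2^{5/2}e$. As a self-contained fallback, a dyadic peeling over the probability scale $\bar{G}_k(t)\in(0,1)$ combined with a variance-sensitive Bennett bound (using that the Bernoulli variance shrinks near the endpoints, so that the extreme scales contribute only a convergent geometric series) would give the same form of bound, but at the price of a slightly smaller constant in the exponent, which would then have to be sharpened back to $2$.
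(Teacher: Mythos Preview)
The paper does not prove this lemma at all: it is quoted verbatim as Theorem~4.6 of \citet{Moesching2020}, ``which we state for completeness,'' with no argument given. There is therefore no in-paper proof to compare your proposal against.

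On the merits of your sketch: the overall strategy---two-sided splitting, quantile coupling, discretization at the order statistics, then a first-passage/reflection device to avoid the factor $k$ from a naive union bound---is the classical DKW route and is reasonable in outline. Two remarks. First, a small sign slip: with $t_j=\inf\{t:\bar G_k(t)\ge j/k-s\}$, right-continuity gives $\bar G_k(t_j)\ge j/k-s$ and hence $\mathbb{E}\,N(t_j)\ge j-ks$, the wrong direction for your deviation claim; you need to work with left limits or define $t_j$ via the sublevel set instead. Second, and more substantively, you correctly flag the reflection step as the crux but do not carry it out. In the i.i.d.\ uniform case the reflection argument rests on exchangeability of the spacings, which is lost here; negative association of the increments is the right structural surrogate, but converting that into a bound $C\exp(-2\eta^2)$ with a \emph{universal} $C$---let alone tracking the constants to land at $C\le 2^{5/2}e$---is precisely the content of the result, and it remains unexecuted in your proposal. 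Your dyadic fallback would, as you acknowledge, degrade the exponent below $2$ and therefore not deliver the lemma as stated.
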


The results and proofs below use the following definitions. We denote by $\lambda(J)$ the Lebesgue measure of a measurable set $J \subset \mathbb{R}$, and define the events
\begin{equation} \label{eq:Bn}
	B_n = \left\{\sup_{x \in \mathcal{X}}\, |g(\hat{\theta}_n(x)) - \theta(x)| < C_0(\log(n)/n)^{1/2}\right\}.
\end{equation}
For $1 \leq r \leq s \leq n$ and a permutation $\sigma$ of $\{1, \dots, n\}$, let
\begin{align*}
	 w_{rs} & = s - r + 1, \quad & \hat{\mathbb{F}}^{\sigma}_{rs} & = \frac{1}{w_{rs}}\sum_{i = r}^s\one\{Y_{n\sigma(i)} \leq \cdot\}, \\
	\bar{F}^{\sigma}_{\theta; rs}(\cdot) & = \frac{1}{w_{rs}} \sum_{i = r}^sF_{\theta(X_{n\sigma(i)})}(\cdot), \quad & \bar{F}^{\sigma}_{\hat{\theta}; rs}(\cdot) & = \frac{1}{w_{rs}} \sum_{i = r}^sF_{\hat{\theta}_n(X_{n\sigma(i)})}(\cdot).
\end{align*}
We use $\pi$ to denote a permutation such that $\hat{\theta}_{n}(X_{n\pi(1)}) \leq \dots \leq \hat{\theta}_{n}(X_{n\pi(n)})$. The permutation $\pi$ is a function of $(X_{ni}, Y_{ni})_{i = 1}^n$ via $(X_{ni})_{i = 1}^n$ and $\hat{\theta}_n$. Let
\begin{equation} \label{eq:Mn}
	M_n^{\pi} = \max_{1 \leq r \leq s \leq n} w_{rs}^{1/2} \|\hat{\mathbb{F}}^{\pi}_{rs} - \bar{F}^{\pi}_{\theta; rs}\|_{\infty}.
\end{equation}

\begin{lem} \label{lem:Mn}
Under (A3) and (A4), there exists a constant $s = s(C_0, C_2) > 0$ such that
\[
	\lim_{n \rightarrow \infty} \mathbb{P}(M_n^{\pi} \geq sn^{1/4}\log(n)^{1/4}) = 0.
\]
\end{lem}
\begin{proof}
Define $m = m(n) = \max(1, \lfloor \lambda(I)/(2c_n) \rfloor)$ with $c_n = C_0 (\log(n)/n)^{1/2}$, where $C_0$ is from assumption (A4). Then, for $n$ large enough such that $c_n \leq \lambda(I)/4$,
\begin{equation} \label{eq:mn}
	2c_n \leq \frac{\lambda(I)}{m} \leq 4c_n.
\end{equation}
Slice the interval $I$ from (A3) into $m$ equally sized, disjoint intervals $J_1, \dots, J_m$ (ordered increasingly). Let $\mathcal{I}_k = \{i \in \{1, \dots, n\}: \theta(X_{ni}) \in J_k\}$, $n_k = \# \mathcal{I}_k$ for $k = 1, \dots, m$, and $N_n = \max_{k = 1, \dots, m} n_k$. Define also $\mathcal{I}_j = \emptyset$ for $j \not\in \{1, \dots, m\}$ and $\bigcup_{i = a}^b A_i = \emptyset$ for any sets $A_i$ and $a > b$.

Let $r, s \in \{1, \dots, n\}$, $r \leq s$, be indices that attain the maximum in \eqref{eq:Mn}, and define the index set $\mathcal{I}^* = \pi(\{r, \dots, s\})$, so that
\[
	M_n^{\pi} = \Big\|\frac{1}{(\#\mathcal{I}^*)^{1/2}} \sum_{i \in \mathcal{I}^*} \left(\one\{Y_{ni} \leq \cdot\} - F_{\theta(X_{ni})}(\cdot) \right) \Big\|_{\infty}.
\]
Note that the indices $r$ and $s$ are (complicated but measurable) functions of $(X_i, Y_i)$, $i = 1, \dots, n$, and thus random variables. Therefore, the set $\mathcal{I}^*$ is also a random set of indices.

If $i, j \in \mathcal{I}^*$ and $g(\hat{\theta}_n(X_{ni})) < g(\hat{\theta}_n(X_{nj}))$, with $g$ from (A4), then $k \in \mathcal{I}^*$ for all $k$ such that $g(\hat{\theta}_n(X_{ni})) < g(\hat{\theta}_n(X_{nk})) < g(\hat{\theta}_n(X_{nj}))$. This follows from $\hat{\theta}_{n}(X_{n\pi(1)}) \leq \dots \leq \hat{\theta}_{n}(X_{n\pi(n)})$, because if $i = \pi(i_0)$, $j = \pi(j_0)$ and $k = \pi(k_0)$, then $g(\hat{\theta}_n(X_{ni})) < g(\hat{\theta}_n(X_{nk})) < g(\hat{\theta}_n(X_{nj}))$ implies that $i_0 < k_0 < j_0$, and $k_0 \in \{i_0, \dots, j_0\} \subseteq \{r, \dots, s\}$ gives $k = \pi(k_0) \in \pi(\{r, \dots, s\}) = \mathcal{I}^*$.

Under the event $B_n$ defined at \eqref{eq:Bn}, $i \in \mathcal{I}_k$ and \eqref{eq:mn} imply that $g(\hat{\theta}(X_{ni})) \in J_t$ for some $t \in \{k-1, k, k + 1\}$. Therefore, for $l, k \in \{1, \dots, m\}$ with $l - k > 2$, it follows $g(\hat{\theta}(X_{ni})) < g(\hat{\theta}(X_{nj}))$ for all $i \in \mathcal{I}_{k}$ and $j \in \mathcal{I}_{l}$. So if $\mathcal{I}^*$ contains indices $i \in \mathcal{I}_k$ and $j \in \mathcal{I}_l$ with $l - k > 2$, then $\mathcal{I}^*$ must also contain all elements of the sets $\mathcal{I}_t$ for $k + 2 < t < l - 2$. Let $\kappa = \min\{j \in \{1, \dots, m\}: \, \mathcal{I}_j \cap \mathcal{I}^* \neq \emptyset\}$, $\ell = \max\{j \in \{1, \dots, m\}: \, \mathcal{I}_j \cap \mathcal{I}^* \neq \emptyset\}$. By the previous considerations, $\mathcal{I}^*$ may contain arbitrary elements of $\mathcal{I}_t$ with $t \in \{\kappa, \kappa + 1, \kappa + 2, \ell - 2, \ell - 1, \ell\}$, and it must contain all indices in $\mathcal{I}_j$ for $\kappa + 3 \leq j \leq \ell - 3$. In conclusion, under $B_n$, $\mathcal{I}^*$ is almost surely contained in the collection of index sets defined by
\[
	S_n = \bigcup_{1 \leq k \leq l \leq m} \left\{ \mathcal{J} \cup \left( \bigcup_{t = k + 3}^{l - 3} \mathcal{I}_t \right): \mathcal{J} \subseteq \left(\bigcup_{t = k}^{k + 2}\mathcal{I}_t\right) \cup \left(\bigcup_{t = l - 2}^{l}\mathcal{I}_t \right)  \,\right\}.
\]
Indeed, on the event $B_n$, we know that $\mathcal{I}^*$ must contain all elements of $\mathcal{I}_j$ for $\kappa + 3 \leq t \leq \ell - 3$. This explains the part $\bigcup_{t = k + 3}^{l - 3} \mathcal{I}_t$ in the definition of $S_n$. As for the $\mathcal{I}_k$ with subscript not in $\{\kappa + 3, \dots, \ell - 3\}$, $\mathcal{I}^*$ may contain any arbitrary selection from their elements. This arbitrary selection is $\mathcal{J} \subseteq \left(\bigcup_{t = k}^{k + 2}\mathcal{I}_t\right) \cup \left(\bigcup_{t = l - 2}^{l}\mathcal{I}_t \right)$. 
For $\kappa$ and $\ell$, all pairs $(k,l)$ with $k \leq l$ are possible, which gives the union over $1 \leq k \leq l \leq n$.

Because $\# \mathcal{I}_t \leq N_n$ for all $t$, one can derive from the definition of $S_n$ that
\begin{align*}
	\#S_n \leq m^2 \cdot 2^{6N_n} = m^2 \exp(6\log(2)N_n).
\end{align*}
We now compute an upper bound for $N_n$, which is a function of $\theta(X_{n1}), \dots, \theta(X_{nn})$ only. Denote by $P$ and $G$ the distribution and the CDF of $\theta(X_{n1})$, and by $\hat{P}$ and $\hat{G}$ the empirical distribution and the empirical CDF of $\theta(X_{n1}), \dots, \theta(X_{nn})$. For any $c \geq 0$,
\begin{align*}
	\mathbb{P}(N_n \geq c) & \leq \sum_{k = 1}^m \mathbb{P}(n_k \geq c) \\
	& \leq \sum_{k = 1}^m \mathbb{P}\Big(\hat{P}(J_k) - P(J_k) \geq \frac{c}{n} - P(J_k)\Big) \\
	& \leq \sum_{k = 1}^m \mathbb{P}\Big(2\|G - \hat{G}\|_{\infty} \geq \frac{c}{n} - P(J_k)\Big).
\end{align*}
For $n$ sufficiently large, $P(J_k) \leq 4C_2C_0c_n = 4C_2C_0(\log(n)/n)^{1/2}$ by \eqref{eq:mn} and by (A3). Replacing $c$ by $d_n = R\log(n)^{1/2}n^{1/2}$ with $R = \max(2, 8C_2C_0)$ and applying Lemma \ref{lem:lem1} and \eqref{eq:mn} yields
\begin{align*}
	\mathbb{P}(N_n \geq d_n) & \leq \sum_{k = 1}^m \mathbb{P}\Big(2\|G - \hat{G}\|_{\infty} \geq \frac{d_n}{n} - 4C_2C_0(\log(n)/n)^{1/2}\Big) \\
	& \leq \sum_{k = 1}^m \mathbb{P}\Big(2\|G - \hat{G}\|_{\infty} \geq \frac{d_n}{2n}\Big) \\
	& \leq m M\exp \left(-2n\Big(\frac{d_n}{4n}\Big)^2 \right) \\
	& \leq \frac{\lambda(I)M}{2(\log(n)/n)^{1/2}}\exp\left(-\log(n)/2\right) \\
	& \leq \frac{\lambda(I)M}{2\log(n)^{1/2}}\exp\left(-\log(n)/2 + \log(n)/2\right) \rightarrow 0, \ n \rightarrow \infty.
\end{align*}

So with asymptotic probability one,
\begin{align*}
\#S_n \leq m^2 \exp\left(6 \log(2) R \log(n)^{1/2}n^{1/2}\right) & \leq \frac{\lambda(I)^2}{4c_n^2} \exp\left(6R \log(2) \log(n)^{1/2}n^{1/2}\right) \\
& \leq r_0 \exp\left(r_1 \log(n)^{1/2}n^{1/2} \right),
\end{align*}
with $r_0 = \lambda(I)^2/(4C_0)$ and $r_1 = 6R\log(2) + 1$.
Define $D_n = \{\#S_n \leq r_0\exp(r_1 \log(n)^{1/2}n^{1/2})\}$, let $\mathfrak{S}_n$ be the power set of $\{1, \dots, n\}$, and, for $\mathcal{J} \in \mathfrak{S}_n$,
\[
	M_n^{\mathcal{J}} = \Big\|\frac{1}{(\#\mathcal{J})^{1/2}} \sum_{i \in \mathcal{J}} \left( \one\{Y_{ni} \leq \cdot\} - F_{\theta(X_{ni})}(\cdot)\right)\Big\|_{\infty}.
\]
Then, for $z_n = s\log(n)^{1/4}n^{1/4}$ with an arbitrary $s > 0$,
\begin{align*}
	& \mathbb{P}(M_n^{\pi} \geq z_n) = \mathbb{E}\left(\one\left\{M_n^{\mathcal{I}^*} \geq z_n\right\} \right) \\
	& \ = \mathbb{E}\left(\sum_{\mathcal{J} \in \mathfrak{S}_n} \one\{\mathcal{I}^* = \mathcal{J}\} \one\left\{M_n^{\mathcal{J}} \geq z_n\right\}\right) \\
	& \ \leq \mathbb{P}(B_n^c) + \mathbb{E}\left( \one B_n \sum_{\mathcal{J} \in \mathfrak{S}_n}\one\{\mathcal{I}^* = \mathcal{J}\} \one\left\{M_n^{\mathcal{J}} \geq z_n\right\}\right) \\
	& \ = \mathbb{P}(B_n^c) + \mathbb{E}\left( \one B_n \sum_{\mathcal{J} \in S_n}\one\{\mathcal{I}^* = \mathcal{J}\}  \one\left\{M_n^{\mathcal{J}} \geq z_n\right\}\right) \\
	& \ \leq \mathbb{P}(B_n^c) + \mathbb{E}\left(\sum_{\mathcal{J} \in S_n}\one\{\mathcal{I}^* = \mathcal{J}\} \one\left\{M_n^{\mathcal{J}} \geq z_n\right\}\right) \\
	& \ \leq \mathbb{P}(B_n^c) + \mathbb{P}(D_n^c) + \mathbb{E}\left( \one D_n \mathbb{E}\left[\sum_{\mathcal{J} \in S_n} \one\{\mathcal{I}^* = \mathcal{J}\} \one\left\{M_n^{\mathcal{J}} \geq z_n\right\} \middle| X_{n1}, \dots, X_{nn} \right]\right).
\end{align*}
In the last inequality we use the fact that $\one D_n$ is a function of $X_{n1}, \dots, X_{nn}$ and
\[
	\mathbb{E}\left[\sum_{\mathcal{J} \in S_n} \one\{\mathcal{I}^* = \mathcal{J}\} \one\left\{M_n^{\mathcal{J}} \geq z_n\right\} \middle| X_{n1}, \dots, X_{nn} \right] \leq 1 \text{ a.s.},
\]
since $\mathcal{I}^* = \mathcal{J}$ may only hold for exactly one index set $\mathcal{J}$. Finally,
\begin{align*}
& \mathbb{E}\left( \one D_n \mathbb{E}\left[\sum_{\mathcal{J} \in S_n} \one\{\mathcal{I}^* = \mathcal{J}\} \one\left\{M_n^{\mathcal{J}} \geq z_n\right\} \middle| X_{n1}, \dots, X_{nn} \right]\right) \\
& \leq \mathbb{E}\left( \one D_n \sum_{\mathcal{J} \in S_n} \mathbb{E}\left[\one\left\{M_n^{\mathcal{J}} \geq z_n\right\} \mid X_{n1}, \dots, X_{nn} \right]\right). \\
& = \mathbb{E}\left( \one D_n \sum_{\mathcal{J} \in S_n} \mathbb{P}\left[M_n^{\mathcal{J}} \geq z_n \mid X_{n1}, \dots, X_{nn} \right]\right). \\
& \leq \mathbb{E}\left( \one D_n (\# S_n) M\exp(-2z_n^2)\right) \\
& \leq r_0M\exp\left(-(2s^2 - r_1)\log(n)^{1/2}n^{1/2} \right) \rightarrow 0, \ n \rightarrow \infty,
\end{align*}
for $s > \sqrt{r_1/2}$, using Lemma \ref{lem:lem1} in the second-last inequality.
\end{proof}

Lemma \ref{lem:dense} shows that for suitable constants $D$ and sequences $(\delta_n)_{n \in \mathbb{N}}$ with limit zero, all subintervals of $I$ with length at least $\delta_n$ contain at least $Dn\delta_n$ elements of $\{g(\hat{\theta}_n(X_{nj})): \, j = 1, \dots, n\}$. That is, the pseudo-covariates $g(\hat{\theta}_n(X_{nj}))$ are asymptotically dense in $I$.

\begin{lem} \label{lem:dense}
Under (A3) and (A4), with $\hat{w}(B) = \#\{j \in \{1, \ldots, n\}: g(\hat{\theta}_n(X_{nj})) \in B\}$, for any sequence $(\delta_n)_{n \in \mathbb{N}}$ such that $\delta_n \geq 4C_0 (\log(n)/n)^{1/2}$, the event
\begin{equation} \label{eq:dense2}
	\left\{\inf \left\{\frac{\hat{w}(I_n)}{n\lambda(I_n)}: \text{intervals } I_n \subset I \text{ with } \lambda(I_n) \geq \delta_n \right\} \geq D\right\}
\end{equation}
has asymptotic probability one for any $D < C_1/2$.
\end{lem}

\begin{proof}[Proof of Lemma \ref{lem:dense}]
Similarly to the definition of $\hat{w}$, let $w(B) = \#\{j \in \{1, \ldots, n\}: \theta(X_{nj}) \in B\}$ for $B \subseteq I$. Define $c_n = C_0(\log(n)/n)^{1/2}$ with $C_0$ from (A4). Then on the event $B_n$ defined at \eqref{eq:Bn}, for any interval $J \subseteq I$ with $\lambda(J) \geq 2c_n$,
\begin{align*}
	\hat{w}(J)-w(J) & \geq -\#\{j \in \{1, \dots, n\}: \, \hat{\theta}_n(X_{nj}) \not\in J, \, \theta(X_{nj}) \in J\} \\
	& \geq -w(\{z \in J: \, z + c_n \not\in J \text{ or } z - c_n \not\in J \}).
\end{align*}
This gives $\hat{w}(J) \geq \ w(J \setminus \{z \in J: \, z + c_n \not\in J \text{ or } z - c_n \not\in J \})$. The assumption $\delta_n \geq 4c_n$ implies that $\delta_n - 2c_n \geq \delta_n / 2$. For any interval $I_n \subseteq I$ of length at least $\delta_n$, the set $\tilde{I}_n = I_n \setminus \{z \in I_n: \, z + c_n \not\in I_n \text{ or } z - c_n \not\in I_n \}$ is an interval of length
\[
	\lambda(\tilde{I}_n) = \lambda(I_n) - 2c_n \geq \lambda(I_n) - \delta_n/2 \geq \lambda(I_n) - \lambda(I_n)/2 = \lambda(I_n)/2.
\]
This and $\hat{w}(I_n)\geq w(\tilde{I}_n)$ yield
\begin{align*}
	\hat{m}_n & := \inf \left\{\frac{\hat{w}(I_n)}{n\lambda(I_n)}: \text{intervals } I_n \subset I \text{ with } \lambda(I_n) \geq \delta_n \right\} \\
	& \geq \inf \left\{\frac{w(\tilde{I}_n)}{n\lambda(\tilde{I}_n)}: \text{intervals } \tilde{I}_n \subset I \text{ with } \lambda(\tilde{I}_n) \geq \delta_n/2 \right\}/2 =: m_n.
\end{align*}
Define $A_n = \{\hat{m}_n \geq D\}$ and $\tilde{A}_n = \{m_n \geq D\}$ for $D < C_1/2$. Then $\tilde{A}_n \subseteq A_n$ and
\[
	\mathbb{P}(A_n) \geq \mathbb{P}(A_n \cap B_n) \geq \mathbb{P}(\tilde{A}_n \cap B_n) = \mathbb{P}(\tilde{A}_n) + \mathbb{P}(B_n)- \mathbb{P}(\tilde{A}_n \cup B_n) \rightarrow 1, \, n \rightarrow \infty,
\]
since $\lim_{n\rightarrow\infty}\mathbb{P}(B_n) = 1$ by (A4) and $\lim_{n\rightarrow\infty}\mathbb{P}(\tilde{A}_n) = 1$ by (A3) and by Equation 4.6 of \citet[Section 4.3]{Moesching2020}.
\end{proof}

\begin{proof}[Proof of Theorem \ref{thm:consistency}]
Proposition \ref{prop:invariance} implies that for all $u \in \mathbb{R}$,
\[
	\hat{F}_u(y; (\hat{\theta}_n(X_{nj}))_{j = 1}^n, (Y_{nj})_{j = 1}^n) = \hat{F}_{g(u)}(y; (g(\hat{\theta}_n(X_{nj})))_{j = 1}^n, (Y_{nj})_{j = 1}^n).
\]
To lighten the notation, we can therefore drop $g$ from (A4) and simply write $\hat{\theta}_n(\cdot)$ instead of $g(\hat{\theta}_n(\cdot))$. Assume that $\hat{\theta}_n(X_{n\pi(1)}) \leq \hat{\theta}_n(X_{n\pi(2)}) \leq \ldots \leq \hat{\theta}_n(X_{n\pi(n)})$ and define $\delta_n = (\log n / n)^{1/6}/2$. Lemma \ref{lem:dense} and (A4) imply that for all $x \in \mathcal{X}_n = \{x \in \mathcal{X}: \, [\theta(x) \pm 2\delta_n ] \subseteq I\}$, the indices
\begin{align*}
	r(x) & = \min\{j \in \{1, \ldots, n\}: \hat{\theta}_n(X_{n\pi(j)}) \geq \hat{\theta}_n(x) - \delta_n\} \\
	j(x) & = \max\{j \in \{1, \ldots, n\}: \hat{\theta}_n(X_{n\pi(j)}) \leq \hat{\theta}_n(x) \}
\end{align*}
are well defined with asymptotic probability one, because $[\hat{\theta}_n(x) - \delta_n, \hat{\theta}_n(x)]$ is of length $\delta_n$ and contained in $I$ since $\theta(x) + (\log n / n)^{1/6} \geq \hat{\theta}_n(x) \geq \hat{\theta}_n(x) - \delta_n \geq \theta(x) - \delta_n - C_0n^{-1/2} > \theta(x) - (\log n / n)^{1/6}$ for $n$ sufficiently large, on the event $B_n$ defined at \eqref{eq:Bn}. They satisfy $r(x) \leq j(x)$ and $\hat{\theta}_n(x) - \delta_n \leq \hat{\theta}_n(X_{nr(x)}) \leq \hat{\theta}_n(X_{nj(x)}) \leq \hat{\theta}_n(x)$ and, with asymptotic probability one due to Lemma \ref{lem:dense}, $w_{r(x)j(x)} = \#\{j \in \{1, \dots, n\}: \hat{\theta}_n(x) - \delta_n \leq \hat{\theta}_n(X_{n\pi(j)}) \leq \hat{\theta}_n(x)\} \geq D n \delta_n$ for $0 < D < C_1/2$.
Therefore, almost surely with respect to the joint law of $(X_{ni}, Y_{ni})$, $i = 1, \dots, n$, for any $y \in \mathbb{R}$,
\begin{align*}
	\hat{F}_{n; \hat{\theta}_n(x)} (y) - F_{\theta(x)}(y) & \leq 
	\hat{F}_{n; \hat{\theta}_n(X_{nj(x)})}(y) -  F_{\theta(x)}(y) \\
	& = \min_{r \leq j(x)} \max_{s \geq j(x)} \hat{\mathbb{F}}^{\pi}_{rs}(y) - F_{\theta(x)}(y) \\
	& \leq \max_{s \geq j(x)} \hat{\mathbb{F}}^{\pi}_{r(x)s}(y)  - F_{\theta(x)}(y) \\
	& \leq w_{r(x)j(x)}^{-1/2} M_n^{\pi} + \max_{s \geq j(x)} \bar{F}^{\pi}_{\theta; r(x)s}(y) - F_{\theta(x)}(y) \\
	& \leq (Dn\delta_n)^{-1/2}M_n^{\pi} \\
	& \qquad + \max_{s \geq j(x)} \big(\bar{F}^{\pi}_{\theta; r(x)s}(y) - \bar{F}^{\pi}_{\hat{\theta}; r(x)s}(y) + \bar{F}^{\pi}_{\hat{\theta}; r(x)s}(y) \big) - F_{\theta(x)}(y) \\
	& \leq (Dn\delta_n)^{-1/2}M_n^{\pi} + L\sup_{x \in \mathcal{X}}|\hat{\theta}_n(x) - \theta(x)| + \max_{s \geq j(x)} \bar{F}^{\pi}_{\hat{\theta}; r(x)s}(y) - F_{\theta(x)}(y) \\
	& \leq (Dn\delta_n)^{-1/2}M_n^{\pi} + L\sup_{x \in \mathcal{X}}|\hat{\theta}_n(x) - \theta(x)| + F_{\hat{\theta}_n(X_{nr(x)})}(y) - F_{\theta(x)}(y) \\
	& \leq (Dn\delta_n)^{-1/2}M_n^{\pi} + L\sup_{x \in \mathcal{X}}|\hat{\theta}_n(x) - \theta(x)| + L|\hat{\theta}_n(X_{nr(x)}) - \theta(x)| \\
	& \leq (Dn\delta_n)^{-1/2}M_n^{\pi} + L\sup_{x \in \mathcal{X}}|\hat{\theta}_n(x) - \theta(x)| + L\delta_n.
\end{align*}
The equality in the second line is the classical min-max formula for monotone regression, see e.g.~Equation (2.2) in \cite{Moesching2020}, and the first and the third last inequality use antitonicity of $u \mapsto F_u(y)$. By assumption (A4) and with the constant $s > 0$ from Lemma \ref{lem:Mn}, the event
\[
	\{M_n^{\pi} \leq s(n\log(n))^{1/4}\} \cap  \left\{\sup_{x \in \mathcal{X}}|\hat{\theta}_n(x) - \theta(x)| < \delta_n\right\}
\]
has asymptotic probability one. On this event, the previous considerations imply
\[
	\sup_{x \in \mathcal{X}_n, y \in \mathbb{R}} (\hat{F}_{n; \hat{\theta}_n(x)}(y) - F_{\theta(x)})(y) \leq s(Dn\delta_n)^{-1/2}(n \log(n))^{1/4} + 2L\delta_n
	\leq C \left(\frac{\log(n)}{n} \right)^{1/6},
\]
with $C = [s(2D^{-1})^{1/2} + L]$. To finish the proof, we show that $F_{\theta(x)}(y) - \hat{F}_{n; \hat{\theta}_n(x)}(y)$ can be bounded in the same way.

Similar to before, define the indices $r'(x) = \min\{j \in \{1, \ldots, n\}: \hat{\theta}_n(X_{nj}) \geq \hat{\theta}_n(x)\}$, $j'(x) = \max\{j \in \{1, \ldots, n\}: \hat{\theta}_n(X_{nj}) \leq \hat{\theta}_n(x) + \delta_n \}$.
Then with asymptotic probability one, also $r'(x) \leq j'(x)$ and $\hat{\theta}_n(x) \leq \hat{\theta}_n(X_{nr'(x)}) \leq \hat{\theta}_n(X_{nj'(x)}) \leq \hat{\theta}_n(x) + \delta_n$, $w_{r'(x)j'(x)} \geq D n \delta_n$.
Thus,
\begin{align*}
	\hat{F}_{n; \hat{\theta}_n(x)}(y) - F_{\theta(x)}(y) & \geq 
	\hat{F}_{n; \hat{\theta}_n(X_{nr'(x)})}(y) - F_{\theta(x)} \\
	& = \min_{r \leq r'(x)} \max_{s \geq r'(x)} \hat{\mathbb{F}}^{\pi}_{rs}(y) - F_{\theta(x)}(y) \\
	& \geq \min_{r \leq r'(x)} \hat{\mathbb{F}}^{\pi}_{rj'(x)}(y) - F_{\theta(x)}(y) \\
	& \geq -w_{r'(x)j'(x)}^{-1/2} M_n^{\pi} + \min_{r \leq r'(x)} \bar{F}^{\pi}_{\theta; rj'(x)}(y) - F_{\theta(x)}(y) \\	
	& \geq -(Dn\delta_n)^{-1/2}M_n^{\pi}  \\
	& \qquad + \min_{r \leq r'(x)} \big( \bar{F}^{\pi}_{\theta; rj'(x)}(y) - \bar{F}^{\pi}_{\hat{\theta}; rj'(x)}(y) + \bar{F}^{\pi}_{\hat{\theta}; rj'(x)}(y) \big) - F_{\theta(x)}(y) \\	
	& \geq -(Dn\delta_n)^{-1/2}M_n^{\pi} - L\sup_{x \in \mathcal{X}}|\hat{\theta}_n(x) - \theta(x)| + F_{\hat{\theta}_n(X_{nj'(x)})}(y) - F_{\theta(x)}(y) \\
	& \geq -(Dn\delta_n)^{-1/2}M_n^{\pi} - L\sup_{x \in \mathcal{X}}|\hat{\theta}_n(x) - \theta(x)| - L|\hat{\theta}_n(X_{nj'(x)}) - \theta(x)| \\
	& \geq -(Dn\delta_n)^{-1/2}M_n^{\pi} - L\sup_{x \in \mathcal{X}}|\hat{\theta}_n(x) - \theta(x)| - L\delta_n.\qedhere
\end{align*}
\end{proof}

\begin{proof}[Proof of Theorem \ref{thm:consistency} with sample splitting]
Assume that the index estimator $\hat{\theta}_n$ is computed with data $(X_{ni}, Y_{ni})_{i = 1}^{\lfloor n\xi\rfloor}$ and the distribution functions with $(\hat{\theta}_n(X_{ni}), Y_{ni})_{i = \lfloor n\xi \rfloor + 1}^n$. The statement of Lemma \ref{lem:dense} also holds when $C_0(\log(n)/n)^{1/2}$ is replaced by $(\log(n)/n)^{1/3}$. By conditioning on $(X_{ni}, Y_{ni})_{i = 1}^{\lfloor n\xi\rfloor}$ and on $X_{ni}$, $i = \lfloor n\xi \rfloor + 1, \dots, n$, Corollary 4.7 of \cite{Moesching2020} implies that $M_n^{\pi}$ (computed with the data $(\hat{\theta}_n(X_{ni}), Y_{ni})_{i = \lfloor n\xi \rfloor + 1}^n$) satisfies $\mathbb{P}(M_n^{\pi} \geq (R\log(n(1-\xi)))^{1/2}) \rightarrow 0$, $n \rightarrow \infty$, for any $R > 1$. This requires the fact that the permutation $\pi$ is constant when conditioned on $(X_{ni}, Y_{ni})_{i = 1}^{\lfloor n\xi \rfloor}$. One may now follow exactly the same steps as in the proof for the theorem without sample splitting, but with sample size $\lfloor n(1-\xi) \rfloor$ instead of $n$, $\delta_n = (n(1-\xi)/\log(n(1-\xi)))^{1/3}/2$ instead of $(n/\log(n))^{1/6}$ and $\{M_n^{\pi} \leq (R\log(n(1-\xi)))^{1/2}\}$ instead of $\{M_n^{\pi} \leq s(n\log(n))^{1/4}\}$, obtaining an upper bound of $C'(\log(n)/n)^{1/3}$ for the error, where $C' > 0$ also depends on $\xi$.
\end{proof}

\bibliographystyle{apalike} 
\bibliography{biblio.DIM}

\section{Supplementary Material} \label{app:supplement}
\beginsupplement

\subsection{Model selection}
All models in the data application (Section 6 in the article) have been fine-tuned, and different model variants were evaluated via out-of-sample predictions on the part of the data left for model selection. Table \ref{tab:robustness} and Table \ref{tab:bagging} provide detailed results and show that the performance of the methods is robust in terms of CRPS-ranking and consistent with the findings in the article. The key steps in model tuning are summarized below.

\medskip
\textbf{Response transformations:} The outcome variable, LoS, is strongly right skewed, which suggests a log-transformation.
\begin{itemize}
\item The index estimation for DIM may benefit from response transformations, but the transformation does not directly have an impact on the estimation of the conditional CDFs. Index models with (lognormal, scaled-t) and without (gamma) log-transformation of the LoS have been considered, c.f.~Section 6.2 in the article.
\item Cox regression is invariant under strictly isotonic transformations of the response, so no response transformations need to be considered.
\item Quantile regression is more robust to outliers than regression models for the mean, and it does not necessarily require transformations with skew response variables.
Nevertheless, we verified if transformation $y \mapsto \log(y + 1)$ as used in the DIM index estimation improves the results. (The transformation
$\log(y)$ was also checked but clearly inferior.) The transformed model gave only a
minor improvement on average over the ICUs, and diverging, meaningless distributions for some patients (removed for the computation of the 
averages in Table \ref{tab:robustness}), and has therefore been discarded.
\end{itemize}

\textbf{Covariate selection:} The choice of covariates, including modelling effects of continuous variables with splines, can be
expected to have similar effects for all methods.
\begin{itemize}
\item In all models, cubic splines were used to model the effects of the continuous variables age, NEMS and SAPS II.
For Cox regression and for the index in DIM, determining a suitable dimension of the spline basis was done by using
{\tt k.check} of the {\tt mgcv} package and by graphical tools for checking the robustness of the fit. The
dimension parameter {\tt k} was finally fixed at 12 for both regression methods.
\item For quantile regression, cubic splines with equispaced knots or with knots at quantiles of the respective variables
in the training data have been compared. The equispaced knots yielded better results, with a spline space dimension similar to
the one for DIM and Cox regression. Additive quantile regression smoothing ({\tt rqss}) in the {\tt quantreg} package has
been explored, but it only offers estimation at single quantiles for each fit and up to two continuous covariates, so 
the standard method {\tt rq} has been selected.
\item We have explored whether merging factor levels with few observations (less than 30 or 50 per category)
improves the predictions. The effect was clearly negative for point forecasts for the mean LoS, as judged with the coefficient
of predictive ability \citep{Gneiting2019}, and has not been further pursued.
\end{itemize}

\textbf{Model-specific parameters:}

\begin{itemize}
\item DIM: The influence of different parametric families for the index function is discussed in Section 6.2 in the article, see also Table \ref{tab:robustness}. Detailed results on the CRPS differences with and without bagging are in Table \ref{tab:bagging}.
\item Cox regression: A possibility to make Cox regression more flexible is stratification by categorical variables. We did not pursue this approach
because it may drastically reduce the number of observations for the baseline hazard estimation and thus for the CDF estimation for some groups of 
patients. (This may be less a problem if the hazard rate and not the distribution functions are the object of interest.)
\item Quantile regression: Quantile regression is estimated on a grid of quantiles. As mentioned in the article, grids with spacing of $0.01$ and $0.001$ have been compared. In principle, the function {\tt rq} in the {\tt quantreg} package offers estimation of the full quantile regression process, but the resulting grid was too fine and led to computational difficulties. As can be seen by comparing the sixth and seventh column in Table \ref{tab:robustness}, a finer grid consistently reduces the CRPS over the ICUs. But given that the improvement by moving from a spacing of 0.01 to 0.001 is rather small, we expect only minor benefits from estimating the whole quantile regression process. 
\end{itemize}

\begin{table}
\centering
\caption{Mean CRPS on data for model selection for different variants of distributional regression methods. Asterisks ($^*$, $^{**}$, $^{***}$) indicate the
three models with the lowest CRPS for each ICU. The DIM models are abbreviated as {\tt logn}, {\tt scat} and {\tt gamma} for the variants with lognormal, scaled-t and
gamma parametric families for index estimation, without bagging. The codes for quantile regression represent models with equispaced knots for splines ({\tt e}) or
with knots at quantiles of the respective variables ({\tt q}), untransformed response variable ({\tt u}) or with the transformation
$\log(1 + y)$ ({\tt log}). The first quantile regression model (sixth column in table) is fitted on a grid with spacing 0.001 ({\tt 0.001}), the others on 
a grid with spacing 0.01. \label{tab:robustness}}
\bigskip
\resizebox{0.95\linewidth}{!}{%
\begin{tabular}{l|l|lll|lllll}
\toprule
 & Cox.~reg. & \multicolumn{3}{c|}{DIM} & \multicolumn{5}{c}{Quantile regression} \\
\midrule
Tuning &			 & {\tt logn} & {\tt scat} & {\tt gamma} & {\tt e\_u\_0.001} & {\tt e\_u} & {\tt e\_log} & {\tt q\_log} & {\tt q\_u}  \\
\midrule
ICU4 & $1.212^{}$ & $1.188^{*}$ & $1.193^{***}$ & $1.190^{**}$ & $1.202^{}$ & $1.203^{}$ & $1.200^{}$ & $1.203^{}$ & $1.204^{}$ \\ 
ICU6 & $1.632^{}$ & $1.606^{*}$ & $1.610^{**}$ & $1.622^{***}$ & $1.628^{}$ & $1.631^{}$ & $1.623^{}$ & $1.628^{}$ & $1.644^{}$ \\ 
ICU10 & $1.094^{}$ & $1.076^{**}$ & $1.081^{***}$ & $1.075^{*}$ & $1.098^{}$ & $1.099^{}$ & $1.090^{}$ & $1.092^{}$ & $1.103^{}$ \\ 
ICU19 & $1.253^{}$ & $1.248^{}$ & $1.252^{}$ & $1.262^{}$ & $1.241^{***}$ & $1.242^{}$ & $1.238^{**}$ & $1.237^{*}$ & $1.243^{}$ \\ 
ICU20 & $1.880^{}$ & $1.839^{*}$ & $1.865^{***}$ & $1.853^{**}$ & $1.904^{}$ & $1.908^{}$ & $1.885^{}$ & $1.882^{}$ & $1.917^{}$ \\ 
ICU24 & $0.972^{}$ & $0.937^{*}$ & $0.946^{}$ & $0.945^{***}$ & $0.948^{}$ & $0.948^{}$ & $0.951^{}$ & $0.945^{**}$ & $0.955^{}$ \\ 
ICU33 & $0.903^{}$ & $0.895^{}$ & $0.897^{}$ & $0.897^{}$ & $0.893^{**}$ & $0.894^{}$ & $0.893^{*}$ & $0.893^{***}$ & $0.895^{}$ \\ 
ICU39 & $1.907^{}$ & $1.865^{*}$ & $1.879^{***}$ & $1.870^{**}$ & $1.884^{}$ & $1.885^{}$ & $1.883^{}$ & $1.885^{}$ & $1.891^{}$ \\ 
ICU44 & $2.266^{}$ & $2.232^{*}$ & $2.239^{***}$ & $2.238^{**}$ & $2.298^{}$ & $2.301^{}$ & $2.263^{}$ & $2.263^{}$ & $2.307^{}$ \\ 
ICU47 & $1.306^{}$ & $1.233^{}$ & $1.255^{}$ & $1.245^{}$ & $1.220^{*}$ & $1.221^{**}$ & $1.234^{}$ & $1.227^{***}$ & $1.232^{}$ \\ 
ICU52 & $2.034^{}$ & $1.998^{*}$ & $1.999^{**}$ & $2.012^{}$ & $2.002^{***}$ & $2.004^{}$ & $2.010^{}$ & $2.011^{}$ & $2.007^{}$ \\ 
ICU55 & $1.196^{}$ & $1.178^{*}$ & $1.210^{}$ & $1.187^{}$ & $1.184^{}$ & $1.185^{}$ & $1.182^{**}$ & $1.182^{***}$ & $1.186^{}$ \\ 
ICU58 & $1.344^{}$ & $1.312^{*}$ & $1.317^{**}$ & $1.320^{***}$ & $1.329^{}$ & $1.330^{}$ & $1.344^{}$ & $1.330^{}$ & $1.328^{}$ \\ 
ICU65 & $1.069^{}$ & $1.004^{*}$ & $1.007^{**}$ & $1.010^{***}$ & $1.011^{}$ & $1.012^{}$ & $1.029^{}$ & $1.040^{}$ & $1.024^{}$ \\ 
ICU76 & $2.552^{}$ & $2.521^{**}$ & $2.532^{***}$ & $2.517^{*}$ & $2.551^{}$ & $2.552^{}$ & $2.543^{}$ & $2.549^{}$ & $2.558^{}$ \\ 
ICU77 & $0.838^{}$ & $0.832^{**}$ & $0.835^{***}$ & $0.825^{*}$ & $0.842^{}$ & $0.843^{}$ & $0.837^{}$ & $0.837^{}$ & $0.845^{}$ \\ 
ICU79 & $1.266^{}$ & $1.211^{*}$ & $1.215^{**}$ & $1.233^{***}$ & $1.263^{}$ & $1.263^{}$ & $1.267^{}$ & $1.285^{}$ & $1.257^{}$ \\ 
ICU80 & $0.996^{}$ & $0.983^{**}$ & $0.999^{}$ & $0.981^{*}$ & $0.998^{}$ & $0.998^{}$ & $0.992^{***}$ & $0.997^{}$ & $1.002^{}$ \\ 
\midrule
Mean & $1.429^{}$ & $1.398^{*}$ & $1.407^{***}$ & $1.405^{**}$ & $1.416^{}$ & $1.418^{}$ & $1.415^{}$ & $1.416^{}$ & $1.422^{}$ \\ 
\bottomrule
\end{tabular}%
}
\end{table}

\begin{table}
\centering
\caption{Increase in CRPS of the DIM when in-sample predictions on the training data are used for the estimation of the conditional CDFs instead of the bagging approach with 100 subsamples, for the lognormal, scaled-t and gamma index models. See Table \ref{tab:robustness} for the average CRPS without bagging. Positive values correspond to higher CRPS (worse predictions) of the variant without bagging. \label{tab:bagging}}
\bigskip
\begin{tabular}{l|ccc}
\toprule
      & Lognormal & Scaled-t  & Gamma \\ 
\midrule
ICU4 & $0.0030$ & $-0.001$ & $0.0020$ \\ 
ICU6 & $0.0060$ & $0.0050$ & $0.0130$ \\ 
ICU10 & $0.0010$ & $0.0010$ & $0.0010$ \\ 
ICU19 & $0.0100$ & $0.0060$ & $0.0230$ \\ 
ICU20 & $0.0030$ & $0.0240$ & $0.0160$ \\ 
ICU24 & $0.0020$ & $0.0040$ & $0.0050$ \\ 
ICU33 & $0.0040$ & $0.0010$ & $0.0030$ \\ 
ICU39 & $0.0100$ & $0.0070$ & $0.0120$ \\ 
ICU44 & $-0.002$ & $-0.002$ & $0.0050$ \\ 
ICU47 & $0.0030$ & $0.0020$ & $0.0110$ \\ 
ICU52 & $0.0070$ & $0.0040$ & $0.0060$ \\ 
ICU55 & $0.0020$ & $0.0190$ & $0.0150$ \\ 
ICU58 & $0.0060$ & $0.0040$ & $0.0100$ \\ 
ICU65 & $0.0040$ & $0.0030$ & $0.0030$ \\ 
ICU76 & $000000$ & $0.0030$ & $000000$ \\ 
ICU77 & $0.0070$ & $0.0070$ & $0.0020$ \\ 
ICU79 & $0.0070$ & $-0.001$ & $0.0110$ \\ 
ICU80 & $0.0040$ & $0.0080$ & $0.0020$ \\ 
\midrule
Mean & $0.0040$ & $0.0050$ & $0.0080$ \\ 
\bottomrule
\end{tabular}
\end{table}

\subsection{Discreteness of LoS}
\citet[Appendix SB]{Chernozhukov2013} demonstrate that discreteness in the outcome variable influences the performance of quantile regression relative to other distributional regression techniques. Table \ref{tab:outcome_discrete} and Figure \ref{fig:outcome_discrete} summarize the cumulative proportion
of the most frequent LoS values for each ICU as a measure of discreteness. Compared with Figure SB.1.~in the supplementary material of \cite{Chernozhukov2013}, the discreteness
in the outcome variable is substantially lower in our study. Moreover, there is no relationship between the 
performance of DIM and Cox regression relative to quantile regression (Table 3 in the article) and the degree of
discreteness as summarized in Table \ref{tab:outcome_discrete}. 
As mentioned in the first paragraph of Section 6.4 and visible in Figure 3 in the article, quantile regression
indeed has difficulties in fitting the pattern in the ICU discharge times with marked peaks before noon and in the afternoon.
Nevertheless, it clearly outperforms Cox regression, which is able to correctly recognize this pattern. Based on these two observations, we argue that the disadvantage of quantile regression due to discreteness of the LoS is at most of limited extent and not decisive in our study.

\begin{figure}
\centering
\caption{Cumulative probabilities of LoS attaining one of the $k$ most frequent values, $k = 1, 2, \dots, 25$, stratified
by ICU (identifiers omitted). \label{fig:outcome_discrete}}
\bigskip
\includegraphics[width = 0.8\textwidth]{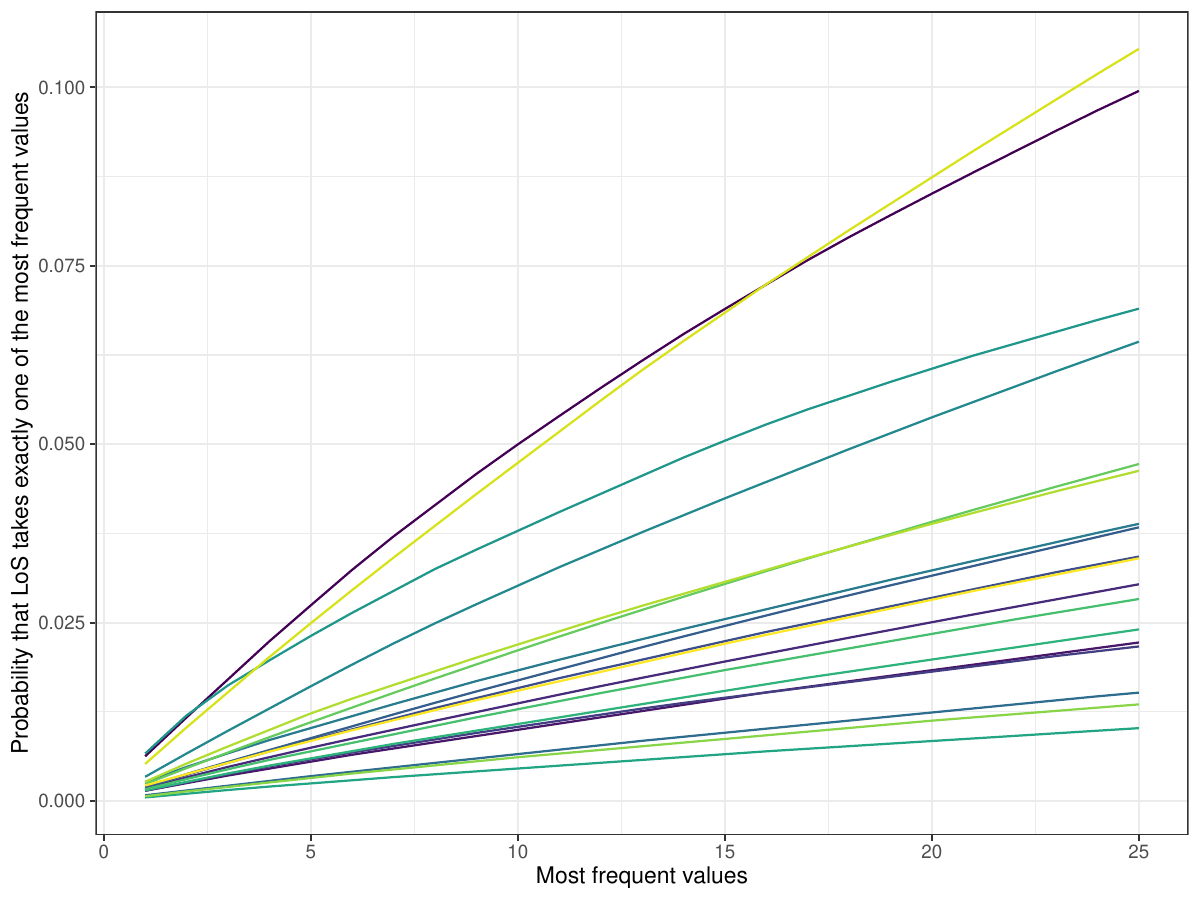}
\end{figure}

\begin{table}
\centering
\caption{Cumulative probabilities of LoS attaining one of the $k$ most frequent values, $k = 1, \, 2,\, 10,\, 25$, stratified
by ICU. \label{tab:outcome_discrete}}
\bigskip
\begin{tabular}{l|r|r|r|r|r|r|r}
\toprule
ICU & 1 & 2 & 3 & 4 & 5 & 10 & 25\\
\midrule
ICU4 & 0.006 & 0.012 & 0.017 & 0.022 & 0.027 & 0.050 & 0.099\\
ICU6 & 0.001 & 0.003 & 0.004 & 0.005 & 0.006 & 0.010 & 0.022\\
ICU10 & 0.002 & 0.003 & 0.005 & 0.006 & 0.007 & 0.014 & 0.030\\
ICU19 & 0.001 & 0.003 & 0.004 & 0.005 & 0.006 & 0.010 & 0.022\\
ICU20 & 0.002 & 0.004 & 0.005 & 0.007 & 0.009 & 0.016 & 0.034\\
ICU24 & 0.002 & 0.004 & 0.005 & 0.007 & 0.009 & 0.017 & 0.038\\
ICU33 & 0.001 & 0.001 & 0.002 & 0.003 & 0.004 & 0.007 & 0.015\\
ICU39 & 0.003 & 0.005 & 0.007 & 0.009 & 0.010 & 0.018 & 0.039\\
ICU44 & 0.003 & 0.007 & 0.010 & 0.013 & 0.016 & 0.030 & 0.064\\
ICU47 & 0.007 & 0.012 & 0.016 & 0.020 & 0.023 & 0.038 & 0.069\\
ICU52 & 0.001 & 0.001 & 0.002 & 0.002 & 0.002 & 0.005 & 0.010\\
ICU55 & 0.002 & 0.003 & 0.004 & 0.005 & 0.006 & 0.011 & 0.024\\
ICU58 & 0.002 & 0.003 & 0.004 & 0.006 & 0.007 & 0.013 & 0.028\\
ICU65 & 0.002 & 0.005 & 0.007 & 0.009 & 0.011 & 0.021 & 0.047\\
ICU76 & 0.001 & 0.001 & 0.002 & 0.003 & 0.003 & 0.006 & 0.014\\
ICU77 & 0.003 & 0.005 & 0.008 & 0.010 & 0.012 & 0.022 & 0.046\\
ICU79 & 0.005 & 0.010 & 0.015 & 0.020 & 0.025 & 0.047 & 0.105\\
ICU80 & 0.002 & 0.004 & 0.005 & 0.007 & 0.008 & 0.015 & 0.034\\
\bottomrule
\end{tabular}
\end{table}

\subsection{Additional figures and tables}

Table \ref{tab:ssummaries} shows summary statistics of the ICU LoS, patient age, SAPS II and NEMS for all ICUs.

\bigskip
Figure \ref{fig:checkRqCrossings} shows probabilistic LoS forecasts obtained by quantile regression, for eight randomly selected patients per ICU. While there are some crossings in the CDFs (e.g.~in ICUs 47 and 52), the CDFs for most patients do not cross and are hence comparable with respect to stochastic dominance, suggesting that the model assumption of the DIM is reasonable for ICU LoS.

\bigskip
Figures \ref{fig:reliability17_revised} and \ref{fig:reliability814_revised} show reliability diagrams for the predicted probability that the LoS exceeds $k = 1, \, 2, \, \ldots, 14$ days for all forecasting methods and ICUs. PIT histograms are shown in Figures \ref{fig:pit1_revised} and \ref{fig:pit2_revised}.

\bigskip
Figure \ref{fig:crpsDiff_revised} shows the difference in CRPS between the quantile regression forecasts and the DIM forecasts. For all ICUs, there is a considerable number of outliers (defined as points outside the $25\%$ ($75\%$) quantile minus (plus) $1.5$ times the interquartile range), so Wilcoxon's signed rank test was applied to compare the CRPS, instead of a t-test.

\begin{table}
\caption{Summary statistics (mean, median and standard deviation) of numeric variables in the dataset. \label{tab:ssummaries}} 
\bigskip
\resizebox{\linewidth}{!}{\begin{tabular}{l|r|r|r||r|r|r||r|r|r||r|r|r}
\toprule
ICU identifier & \multicolumn{3}{c||}{LoS} & \multicolumn{3}{c||}{Age} & \multicolumn{3}{c||}{NEMS} & \multicolumn{3}{c}{SAPS}\\
\hline\hline
& mean & median & sd & mean & median & sd & mean & median & sd & mean & median & sd \\
\hline
ICU4 & 2.1 & 0.7 & 4.4 & 65.7 & 68 & 14.5 & 25.4 & 25 & 9.6 & 29.0 & 26 & 14.5\\
\hline
ICU6 & 2.8 & 0.8 & 5.5 & 65.2 & 69 & 16.7 & 23.3 & 21 & 7.9 & 35.8 & 33 & 18.1\\
\hline
ICU10 & 2.0 & 0.7 & 3.9 & 62.9 & 66 & 15.7 & 27.8 & 27 & 8.8 & 41.6 & 39 & 18.2\\
\hline
ICU19 & 2.1 & 0.9 & 4.7 & 64.7 & 67 & 15.3 & 20.1 & 18 & 7.4 & 29.9 & 25 & 16.6\\
\hline
ICU20 & 3.7 & 0.7 & 8.0 & 64.2 & 67 & 15.3 & 25.5 & 24 & 8.3 & 31.6 & 27 & 17.5\\
\hline
ICU24 & 2.0 & 0.6 & 4.5 & 63.4 & 66 & 15.4 & 24.1 & 18 & 7.7 & 29.5 & 26 & 16.8\\
\hline
ICU33 & 2.0 & 1.0 & 3.3 & 66.1 & 69 & 15.8 & 19.9 & 18 & 7.5 & 36.5 & 33 & 17.4\\
\hline
ICU39 & 2.9 & 1.0 & 6.2 & 62.6 & 65 & 16.5 & 23.2 & 18 & 7.1 & 28.8 & 26 & 15.9\\
\hline
ICU44 & 3.9 & 1.5 & 7.8 & 59.0 & 61 & 17.6 & 27.1 & 27 & 8.5 & 34.0 & 31 & 18.9\\
\hline
ICU47 & 2.5 & 1.5 & 5.1 & 67.6 & 69 & 12.8 & 25.9 & 25 & 7.4 & 27.7 & 26 & 12.7\\
\hline
ICU52 & 3.7 & 1.6 & 6.3 & 60.5 & 63 & 17.3 & 26.2 & 27 & 10.3 & 40.8 & 39 & 18.5\\
\hline
ICU55 & 2.4 & 0.8 & 4.4 & 64.6 & 67 & 16.1 & 20.6 & 18 & 7.8 & 30.8 & 27 & 16.4\\
\hline
ICU58 & 2.6 & 0.7 & 4.8 & 61.7 & 64 & 16.4 & 22.5 & 18 & 7.3 & 28.5 & 26 & 15.0\\
\hline
ICU65 & 1.8 & 0.6 & 4.3 & 67.2 & 69 & 13.9 & 25.5 & 25 & 7.9 & 28.7 & 28 & 12.5\\
\hline
ICU76 & 4.3 & 1.7 & 7.2 & 63.2 & 66 & 15.6 & 30.3 & 30 & 8.3 & 41.2 & 40 & 17.2\\
\hline
ICU77 & 1.8 & 0.6 & 3.2 & 65.0 & 68 & 15.9 & 21.9 & 18 & 8.0 & 31.1 & 28 & 16.1\\
\hline
ICU79 & 2.7 & 0.5 & 5.9 & 55.8 & 57 & 17.0 & 22.4 & 18 & 7.1 & 19.1 & 15 & 15.3\\
\hline
ICU80 & 1.8 & 0.6 & 3.7 & 65.3 & 68 & 16.1 & 19.4 & 18 & 7.3 & 29.0 & 27 & 13.1\\
\bottomrule
\end{tabular}}
\end{table}

\begin{figure}
\caption{Predictive CDFs obtained by quantile regression, for randomly selected patients. \label{fig:checkRqCrossings}}
\bigskip
\center
\includegraphics[width = 0.9\textwidth]{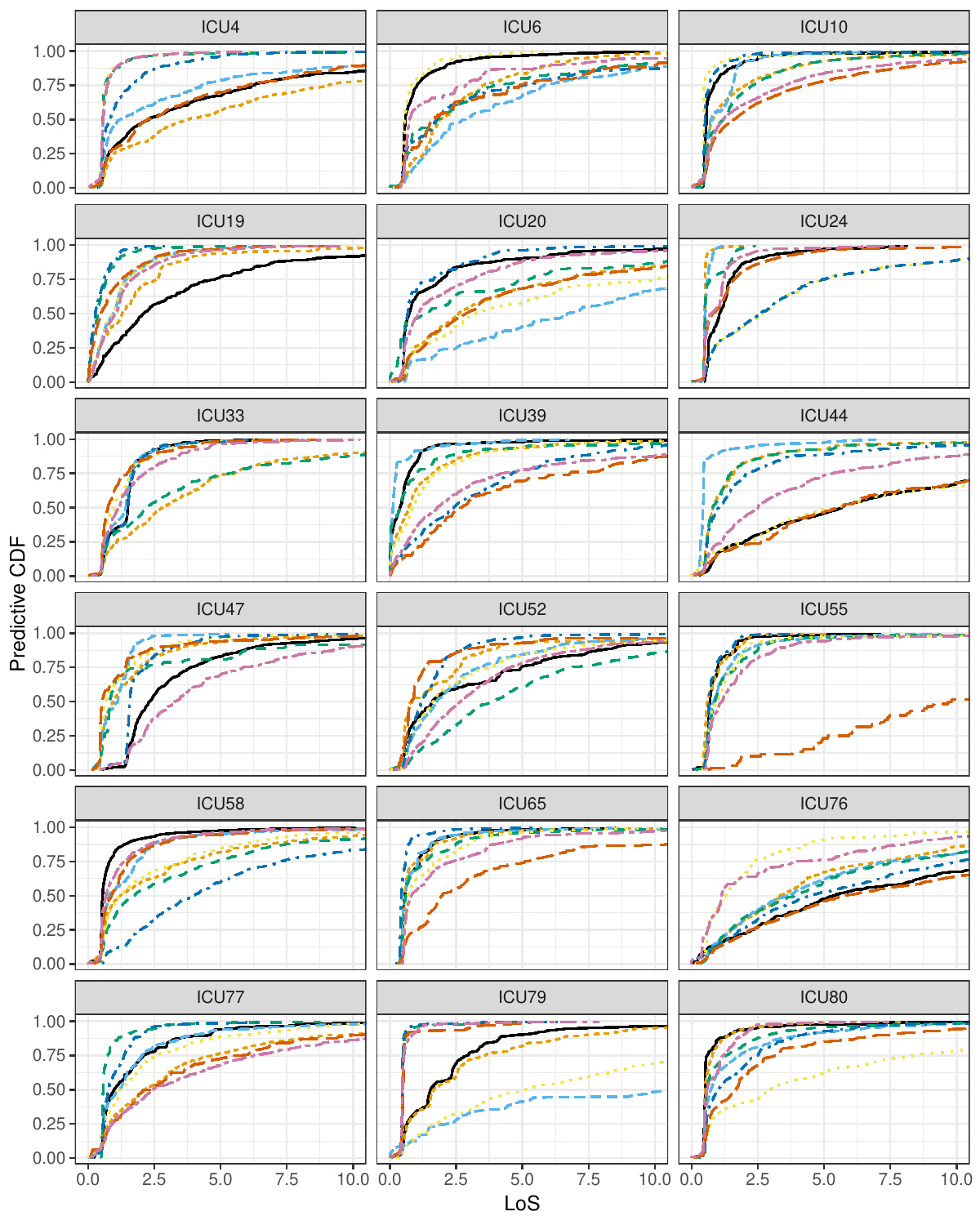}
\end{figure}

\begin{figure}
\caption{Reliability diagrams of probabilistic forecasts for the predicted probability that the LoS exceeds $1, \, 2, \, \ldots, 7$ days. The forecast probability is grouped into the bins $[0, 0.1], (0.1, 0.2], \ldots, (0.9, 1]$. Only bins with more than two observations are included. \label{fig:reliability17_revised}}
\bigskip
\center
\includegraphics[width = 0.8\textwidth]{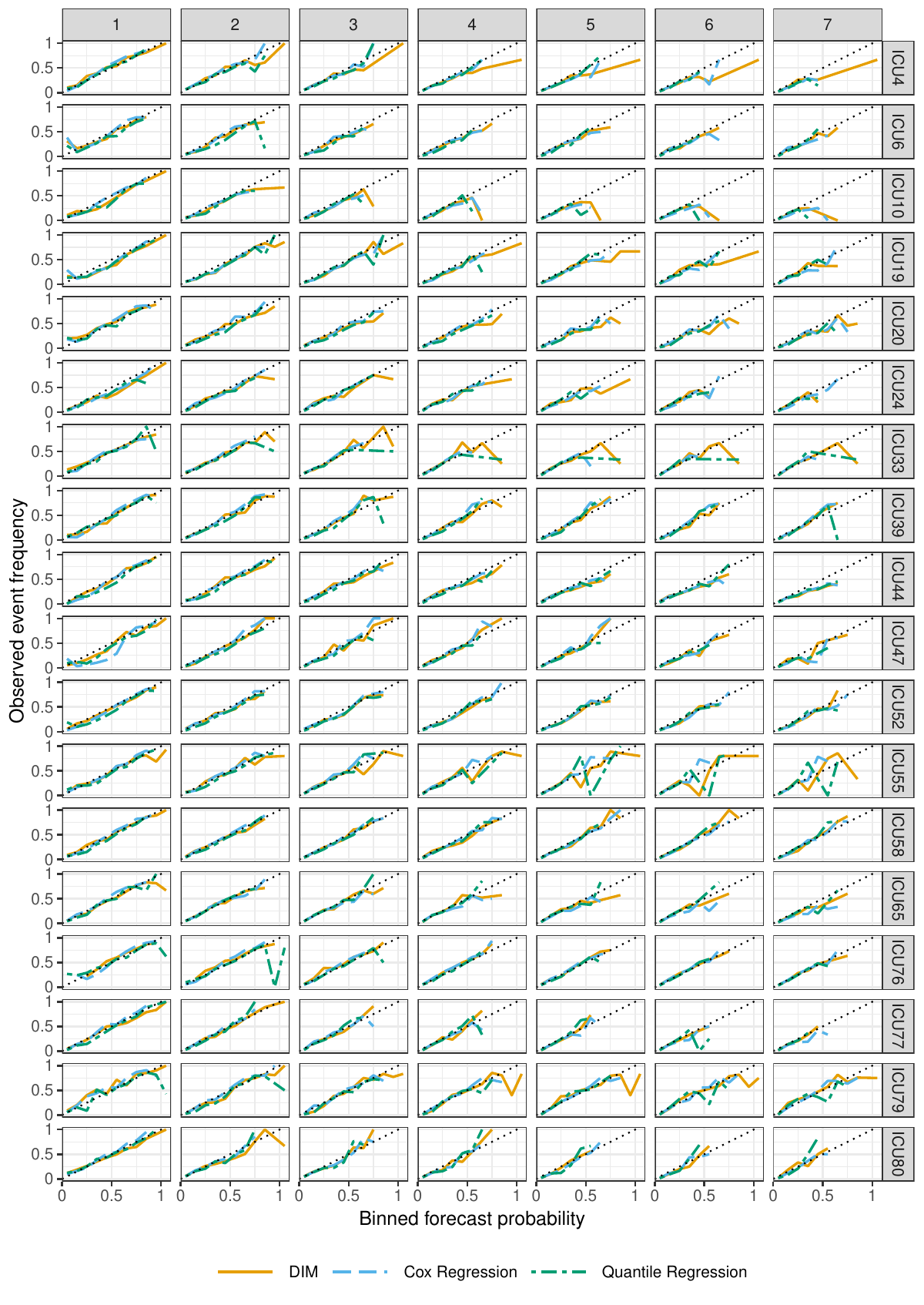}
\end{figure}

\begin{figure}
\caption{Reliability diagrams of probabilistic forecasts for the predicted probability that the LoS exceeds $8, 9, \ldots, 14$ days. The curves are as specified in Figure \ref{fig:reliability17_revised}. \label{fig:reliability814_revised}}
\bigskip
\center
\includegraphics[width = 0.8\textwidth]{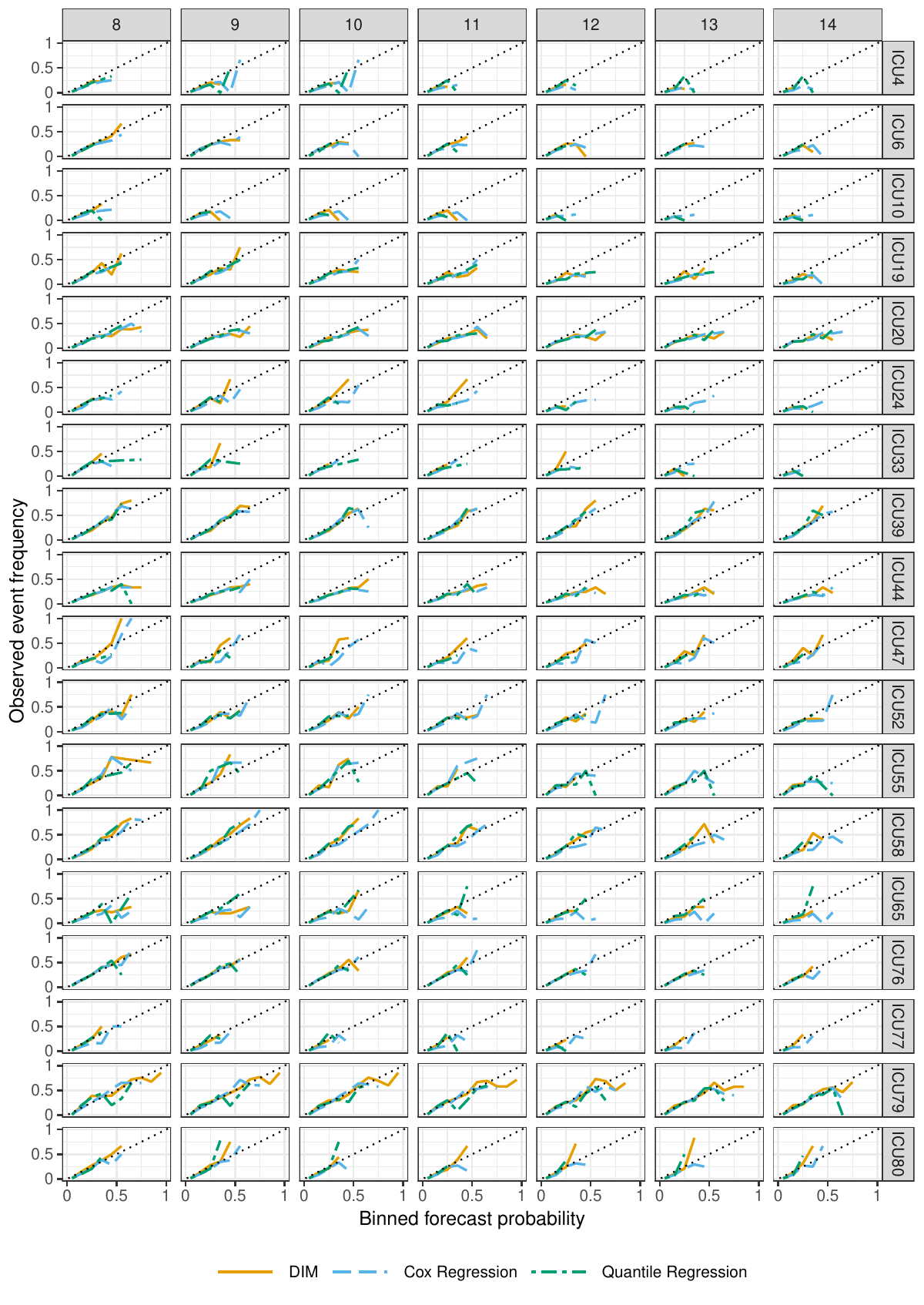}
\end{figure}

\begin{figure}
\caption{PIT histograms of the probabilistic forecasts with bins of width $1/20$ (first nine ICUs). \label{fig:pit1_revised}}
\bigskip
\center
\includegraphics[width = 0.7\textwidth]{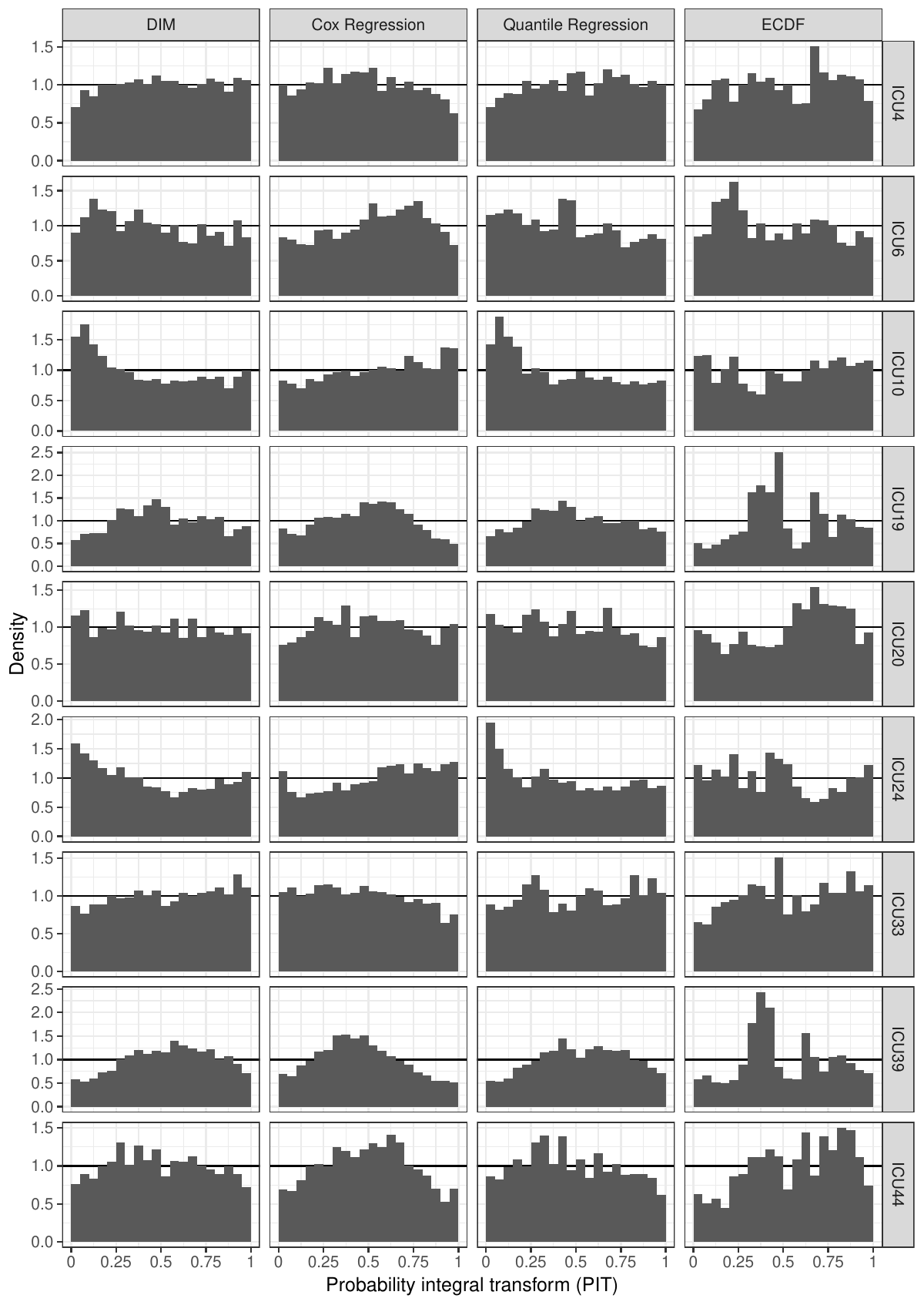}
\end{figure}

\begin{figure}
\caption{PIT histograms of the probabilistic forecasts with bins of width $1/20$ (second half of the ICUs). \label{fig:pit2_revised}}
\bigskip
\center
\includegraphics[width = 0.7\textwidth]{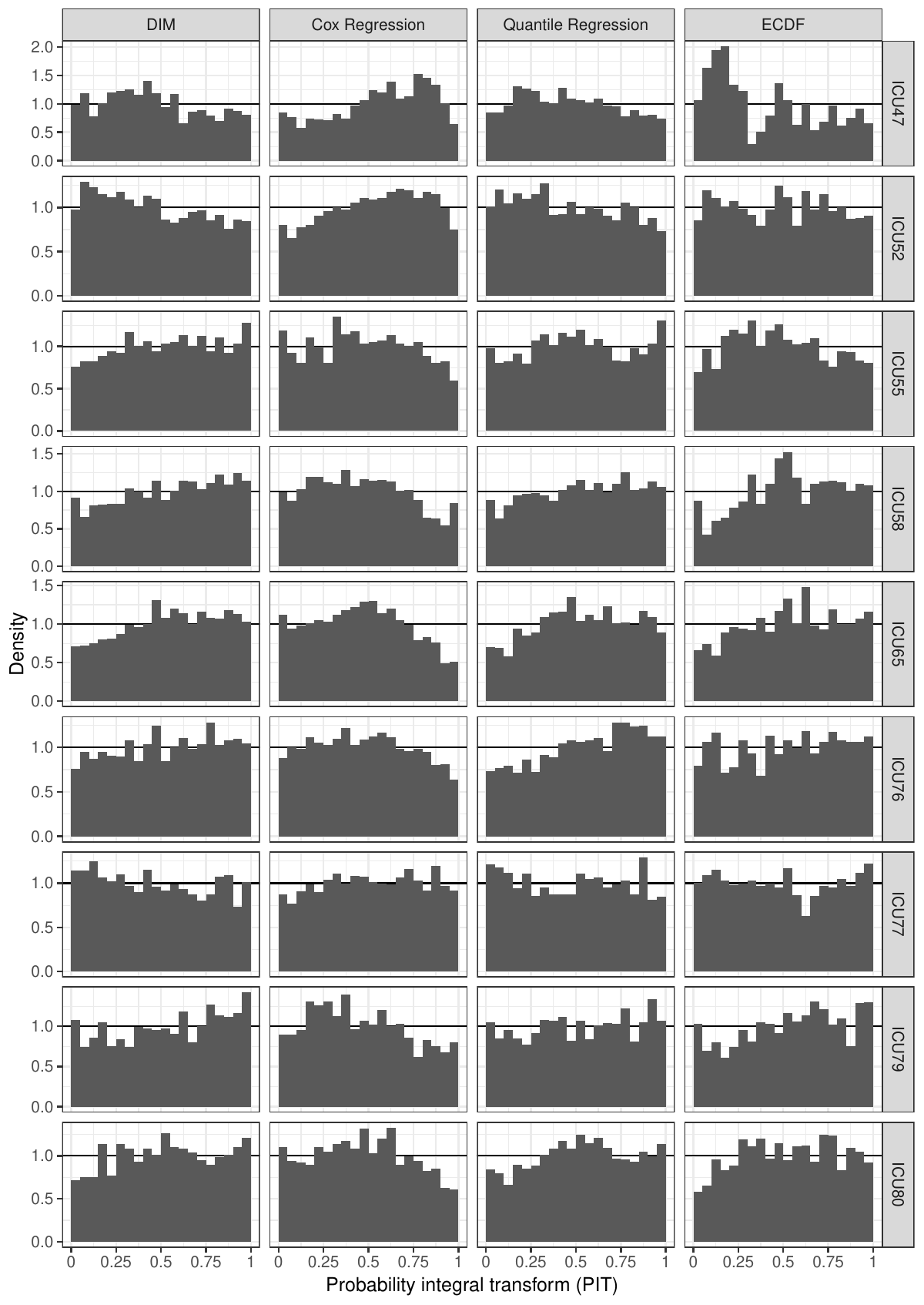}
\end{figure}

\begin{figure}
\caption{Boxplot of the difference in the CRPS of the quantile regression forecasts and of the DIM forecasts. Outliers are displayed as crosses (with horizontal jitter). \label{fig:crpsDiff_revised}}
\bigskip
\center
\includegraphics[width = 0.7\textwidth]{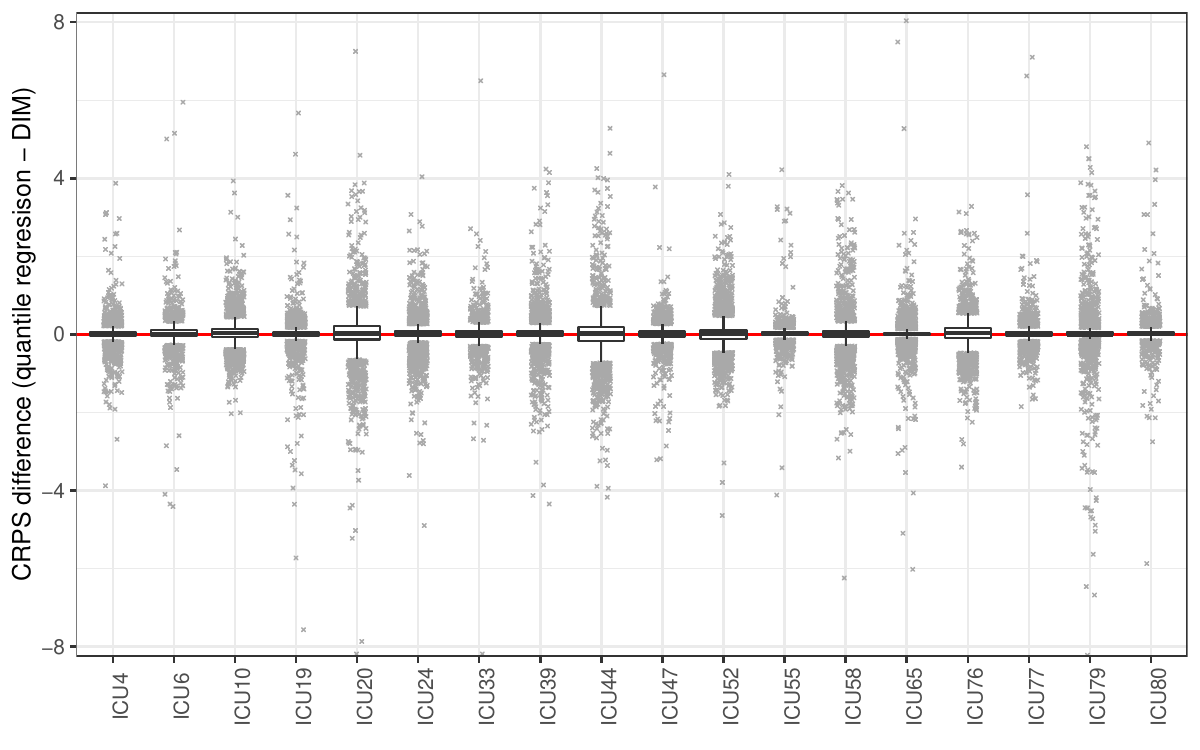}
\end{figure}

\end{document}